\providecommand{\abs}[1]{\lvert#1\rvert}
\providecommand{\bigabs}[1]{\bigl\lvert#1\bigr\rvert}
\providecommand{\Bigabs}[1]{\Bigl\lvert#1\Bigr\rvert}
\providecommand{\norm}[1]{\lVert#1\rVert}
\providecommand{\ind}[1]{\mathbbm{1}_{#1}}
\newcommand{\com}[1]{}
\newcommand{\R}{\mathbb{R}}
\newcommand{\N}{\mathbb{N}}
\newcommand{\E}{\mathbb{E}}
\newcommand{\Prob}{\mathbb{P}}
\newcommand{\Normal}{\mathcal{N}}
\newcommand{\Ip}{\mathcal{J}^\Lambda_\Gamma}
\newcommand{\Ipk}{\mathcal{J}^{\Lambda_k}_{\Gamma_k}}
\newcommand{\dev}{\{\tau_i\leq t\}}
\newcommand{\indDev}{\ind{\dev}}
\newcommand{\pdev}{\Prob\bigl(\dev\bigr)}
\newcommand{\onot}{N_l^{[a,b]}(t)}
\newcommand{\trLoss}{F_l^{[a,b]}}
\newcommand{\Pik}{\Pi_{\Gamma_k}}
\newcommand{\ie}{{\it i.e.}\ }
\DeclareMathOperator{\card}{card}
\theoremstyle{plain}
\newtheorem{thm}{Theorem}
\newtheorem{prop}{Proposition}
\theoremstyle{definition}
\newtheorem{defn}{Definition}
\theoremstyle{remark}
\newtheorem*{remark}{Remark}
\author{{\sc Gilles Pag\`es} \thanks{Laboratoire de Probabilit\'es et Mod\`eles al\'eatoires, UMR~7599, Universit\'e Paris 6, case 188, 4,
pl. Jussieu, F-75252 Paris Cedex 5. E-mail: {\tt  gilles.pages@upmc.fr}} 
\quad {and} 
\quad {\sc Benedikt Wilbertz}\thanks{Laboratoire de Probabilit\'es et Mod\`eles al\'eatoires, UMR~7599, Universit\'e Paris 6, case 188, 4,
pl. Jussieu, F-75252 Paris Cedex 5. E-mail: {\tt
benedikt.wilbertz@upmc.fr}} }
\title{\bf Dual Quantization for random walks with application to credit
derivatives\com{the valuation of CDOs}\thanks{This work has been
supported by the CRIS project from the French pôle  de compétitivité ``Finance
Innovation''}}
\begin{document}

\maketitle

\bibliographystyle{plain}

\begin{abstract}
We propose a new Quantization algorithm for the approximation of inhomogeneous
random walks,
which are the key terms for the valuation of CDO-tranches in latent factor
models.
This approach is based on a dual quantization operator which posses an
intrinsic stationarity and therefore automatically leads to a second order
error bound for the weak approximation.
We illustrate the numerical performance of our methods in case of the
approximation of the conditional tranche function of synthetic CDO products and
draw comparisons to the approximations achieved by the saddlepoint method and
Stein's method.
\end{abstract}

\bigskip
\noindent {\em Keywords: Quantization, Backward Dynamic programming, Random
Walks.}

\com{\bigskip
\noindent {\em MSC: }}

\section{Introduction}

In this paper we focus on the numerical approximation of inhomogeneous
Ber\-noul\-li random walks.

Therefore, let $(\Omega, \mathcal{F}, (\mathcal{F}_t), \Prob)$ be a filtered
probability space on which we define the {\em inhomogeneous random walk}
\begin{equation}\label{eq:defRW}
X := \sum_{i=1}^n \alpha_i Z_i,
\end{equation}
for some independent $\{0,1\}$-valued Bernoulli random variables $Z_i \sim
\mathcal{B}(p_i),\, p_i \in (0,1)$ and $\alpha_i > 0$.

The distribution of $X$ plays a crucial role for the valuation of basket credit
derivatives like CDO-tranches in latent factor models (see e.g. \cite{asb} or
\cite{gregory}). These are credit products, whose payoff is determined by
the loss in large portfolios of defaultable credit underlyings.

Therefore assume that we have a portfolio of $n$ defaultable credit names with
{\em notional amounts} $N_i$ and whose {\em default times} $\tau_i$ are
$(\mathcal{F}_t)$ stopping times, $i=1, \ldots, n$.
Here, $(\mathcal{F}_t)$ stands for the observable filtration of the credit
names. Moreover, we denote the {\em fractional recovery} of the $i$-th credit by $R_i$.

Hence, the {\em fractional loss} of the portfolio up to time $t$ is given by
\begin{equation}\label{eq:defLoss}
  l_t := \sum_{i=1}^n \frac{(1-R_i)N_i}{N} \indDev,
\end{equation}
where $N = \sum_{i=1}^n  N_j$ is the total notional.

Following the ideas of \cite{li} and \cite{frey}, the distributions of
the default events $\dev$ up to a fixed time $t$ are driven under the
risk-neutral probability measure by a common factor $U$ (which we may assume
w.l.o.g. as $\mathcal{U}([0,1])$ distributed) and some idiosyncratic noise
$\varepsilon_i$.

That means, that we assume that the events $\dev, \, i = 1,
\ldots, n$ are conditionally independent given $\sigma(U)$.

Furthermore, we require the existence of a copula function $F: [0,1]^2 \to
[0,1]$, such that $p \mapsto F(p,u)$ is a non-decreasing, right continuous
function for every $u \in [0,1]$ and 
\[
	\int_0^1 F(p,u) \, du = p, \quad p \in [0,1].
\] 

Since it holds
\begin{equation*}
\begin{split}
  \pdev = \E \bigl( \Prob(\dev | U ) \bigr) & = \int_0^1 \Prob\bigl(\dev | U = u
  \bigr)\, du
  \\
  	& = \int_0^1 F\Bigl(\pdev\, , u \Bigr)\, du,
\end{split}
\end{equation*}
we may interpret $F\bigl(\pdev\, , u \bigr)$ as the conditional
default probability $\Prob(\dev | U = u )$.

Typical choices for the function $F$ are the {\em standard Gaussian copula}
\[
	F(p,u) = \Phi \left\{ \frac{\Phi^{-1}(p) -
	\rho\Phi^{-1}(u)}{\sqrt{1-\rho^2}} \right\}
\]
with common correlation parameter $\rho$,
or the {\em Clayton copula} (cf. \cite{gregory}).

Thus, for a fixed time $t$, the risk-neutral conditional distributions of the
portfolio losses $l_t$ given the event $\{U=u\}$ are driven by a random walk of
type (\ref{eq:defRW}) with $\alpha_i := (1-R_i)N_i/N$ and conditionally independent
Bernoulli random variables $Z_i := \indDev$ with parameters $p_i :=
F\Bigl(\pdev\, , u \Bigr)$.

The cash flows of a (synthetic) CDO single tranche $[a,b]$ with attachment
points $0\leq a < b \leq 1$ read \com{can be simplified by assuming continuously
compounding} as follows: 

The protections seller of the tranche $[a,b]$ has to
pay at each default time $\tau_i$ which satisfies $l_{\tau_i}\in[a,b]$ the
notional of the defaulted name minus its recovery, i.e. 
\begin{align*}
	& & (1-R_i)N_i. && \text{({\em default leg})}
\end{align*}

On the other hand
he continuously receives a coupon payment of 
\begin{align*}
	& & \kappa \onot\,dt, && \text{({\em premium leg})}
\end{align*}
where $\kappa$ is the fair spread of the
tranche, which is to be determined by arbitrage arguments.
We denote by $\onot$ the {\it outstanding notional} of the tranche at time $t$, that is the notional
amount of the tranche $[a,b]$ which has not defaulted up to time $t$ .

Assuming a deterministic risk-free interest rate $r$ and continuously
compounding, we note that
\[
\frac{(1-R_i)N_i}{N} \ind{[a,b]}(l_{\tau_i}) = \trLoss(\tau_i) - \trLoss(\tau_i
-),
\]
where the tranche losses $\trLoss$ are defined as
\[
	\trLoss(t) := (l_t-a)^+ - (l_t-b)^+ = \begin{cases}
                                          0 &\text{if}\quad l_t < a\\
                                          l_t -a &\text{if}\quad  a\leq l_t
                                          \leq b\\ 
                                          b-a & \text{if}\quad l_t > b
                                          \end{cases}.
\] 
Hence, the discounted default payments accumulated
up to maturity $T$ maybe written as
 \begin{equation*}
 \begin{split}
 	\sum_{i=1}^n e^{-r\tau_i} (1-R_i)N_i \ind{[a,b]}(l_{\tau_i}) & = N
 	\sum_{i=1}^n e^{-r\tau_i} \bigl[ \trLoss(\tau_i) - \trLoss(\tau_i
	-) \bigr] \ind{\tau_i \leq T}\\ 
	&	= N \int_0^T e^{-rt} \trLoss(dt).
\end{split}
 \end{equation*}
 
 Concerning the premium leg, the outstanding notional $\onot$ of the tranche
 $[a,b]$ is given by
 \[
 \onot = N\cdot \bigl[(b-a) - \trLoss(t) \bigr] = \begin{cases}
                                          N\cdot(b-a) & \text{if}\quad l_t < a,
                                          \\ N\cdot(b-l_t) & \text{if}\quad
                                          a\leq l_t \leq b,\\
                                           0 & \text{if}\quad l_t > b
                                          \end{cases} 
 \]
so that the discounted coupon payments $\kappa\, e^{-rt} \onot\,dt$ accumulate
between $0$ and $T$ to
\[
\kappa \cdot N \int_{0}^T e^{-rt} \bigl[ (b-a) - \trLoss(t) \bigr] dt. 
\]

Under the risk-neutral probability measure both legs have to produce
an equal present value, i.e.
\[
N \int_0^T e^{-rt} \trLoss(dt) = \kappa \cdot N \int_{0}^T e^{-rt} \bigl[ (b-a)
- \trLoss(t) \bigr] dt,
\]
so that taking (risk-neutral) expectation and processing an integration by
parts yield the fair spread value $\kappa$, namely
\[
	\kappa = \frac{ e^{-rt} \E\trLoss(T) + r \int_0^T e^{-rt} \E \trLoss(t)\, dt
	}{\frac{b-a}{r} \bigl[1 - e^{-rT}] - \int_0^T e^{-rt} \E \trLoss(t)\, dt}.
\]
Here, the mathematical challenge consists in the computation of the
expectations $\E\trLoss(t)$.
This leads, within the latent factor models, 
to the approximation of the conditional expectations
\[
	\E(\trLoss(t)|U=u) = \E((l_t - a)^+|U=u) - \E((l_t - b)^+|U=u),
\]
since we have
\begin{equation}\label{eq:condLossInt}
	\E\trLoss(t) = \int_0^1 \E(\trLoss(t)|U=u)\, du.
\end{equation}
As already announced, the conditional distribution of $l_t$ is given by an
inhomogeneous random walk as defined in (\ref{eq:defRW}).

We therefore focus in this paper on the approximation of the distribution of
this type of random walks, the outer integral with respect to $U$ in
(\ref{eq:condLossInt}) can afterwards be approximated by standard quadrature
formulae.

For the usual applications $n$ has a size of about 100, which is by far too
large for an exact computation of the distribution of the random walk $X$, but
still too small to get accurate approximations based on the asymptotics
provided by limit theorems as $n$ goes to $\infty$.

Moreover, we have to deal in this general setting with arbitrary coefficients
$\alpha_i$, which destroy in general any recombining property of the random
walk. As a consequence, no (recombining) tree approach can be implemented.

So far, most approaches developed in the literature for the approximation of the
conditional tranche expectation $\E(\trLoss|U)$ rely upon the saddle point
method (cf. \cite{spThompson}) or an application of Stein's methods for both
Gaussian and Poisson approximation (cf. \cite{steinKaroui}).

\com{Unfortunately, both those methods do not provide any control of the induced
error and even do not allow to improve the accuracy of the approximation using
additional computational time.
Our new approach overcomes these drawbacks and moreover performs in a quite
robust way for even very inhomogeneous choices of the parameters $\alpha_i$ and
$p_i$.}

Although based on completely different mathematical tools, both approaches
suffer from the same lack of accuracy in the computation of
\[
	\E \Bigl( \sum_{i=1}^n \alpha_i Z_i - K \Bigr)^+
\]
when the {\it strike parameter} $K$ is ``at-the-mean'', i.e. when $\sum_{i=1}^n
\alpha_i p_i$ is close to $K$.
From a theoretical point of view no control of the induced error is \com{for
these methods }available. Finally, even if their numerical performances can be
considered as satisfactory in most situations, these approximations methods are
``static'': the ``design'' of the method cannot be modified to improve the
accuracy if a higher complexity is allowed.

The structure is as follows. In section 2 we introduce a new Dual Quantization
scheme for the approximation of the inhomogeneous random walk (\ref{eq:defRW}).
Moreover we establish error bounds for this approximation and discuss its
asymptotic behaviour. Section 3 is devoted to the numerical
implementation of this quantization scheme and its numerical performance.
Finally, in section 4, we give a slight modification of this
scheme to also capture the computation of sensitivities with respect to the
probabilities $p_i$ and the coefficients $\alpha_i$.

\section{Approximation of inhomogeneous Random Walks}
We will focus in this section on the numerical approximation of the
inhomogeneous random walk 
\[
X = \sum_{i=1}^n \alpha_i Z_i
\]
for independent  $Z_i \sim
\mathcal{B}(p_i),\, p_i \in (0,1)$ and $\alpha_i > 0$.

An exact computation of the distribution of $X$ is still not possible with
nowadays computers, since in our cases of interest we have $n \approx 100$ and
$X$ has up to $2^n$ states.\com{Since $X$ is a discrete random variable with up to $2^n$ states, so that in
the cases of interest ($n\approx 100$) this number is still much too large for
being processed by nowadays computers.} 
Hence we aim at constructing a random
variable $\widehat X$ with at most $N \ll 2^n$ states and which is close to $X$, e.g. $\E \abs{X-\widehat X}^2$ is small.

Due to the fact that there is no way to generate $X$ directly, we have to
construct approximations along the\com{ Markov chain
\[
	X^k := \sum_{i=1}^k \alpha_i Z_i
\]
so that
\[
	X^k = X^{k-1} + \alpha_k Z_k,
\]}
raondom walk
\begin{equation*}
\begin{split}
  X^0 & = 0,\\
  X^k & = X^{k-1} + \alpha_k Z_k, \quad k = 1, \ldots, n
\end{split}
\end{equation*}
where the increment $Z_k$ is an ordinary Bernoulli random variable which is easy
to handle.
Clearly we have
\[
	X = X^n
\]
and of course this would work similarly in full generality, if $X$ is a
function of a Markov chain.

Now suppose that we are equipped at each layer $k$ with some grid
$\Gamma_k = \{x_1^k, \ldots, x_{N_k}^k\}$ of size $N_k$ and a
(possibly random) projection operator $\Pik: \R \to \Gamma_k$, which maps the
r.v.'s $X^k$ into $\Gamma_k$.

We then may state a recursive approximation scheme for $X = X^n$ as follows
\begin{equation*}
\begin{split}
\widehat X^0 & := 0 \\
\widehat X^k & := \Pik(\widehat X^{k-\!1} + \alpha_k Z_k), \qquad k = 1, \ldots,
n.
\end{split}
\end{equation*}

This will be\com{Having outlined} the main principle for constructing
the approximation of $X^n$.\com{\widehat X = $\widehat X^n$, so that it} It
remains to choose appropriate grids $\Gamma_k$ and projection operators $\Pik$.
Here, it will turn out that the obvious choice of $\Pik$ as a nearest neighbor
projection is not sufficient in this setting and we will have to develop a new
approach.

\subsection{Quantization and Dual Quantization}
\paragraph{Regular Quantization}
In view of minimizing $\E \abs{X-\widehat X}^2$ for a general r.v.
$X\in L^2(\Prob)$, the above problem directly leads to the well known quadratic
quantization problem (cf. \cite{Foundations})
\begin{equation}\label{eq:DefQ}
\inf\Bigl\{ \E \abs{X - \widehat X}^2: \widehat X \text{ r.v. with
}\card\{\widehat X(\Omega)\} \leq N \Bigr\}
\end{equation}
at some level $N\in \N$. 
We will from now on call any discrete r.v. $\widehat X$ {\it Quantization} and
in particular if $\card\{\widehat X(\Omega)\} \leq N$ we call it {\it
$N$-Quantization}.
 
In fact one easily shows that (\ref{eq:DefQ}) is equivalent to solving
\begin{equation*}
\inf \Bigl\{ \E \min_{x \in \Gamma} \abs{X-x}^2 : \Gamma \subset \R,
\card\{\Gamma\} \leq N \Bigr\},
\end{equation*}
which means that $\Pi_\Gamma$ would be chosen as  a {\it nearest neighbor
projection} operator on $\Gamma$, \ie
\[
	\xi \mapsto \sum_{x \in \Gamma} x \cdot \ind{C_x(\Gamma)}(\xi),
\]
where $(C_x(\Gamma))_{x\in\Gamma}$ denotes a \com{\it
Voronoi-Partition}Borel-partition of $\R$ satisfying
\[
	C_x(\Gamma) \subset \bigl\{ \xi \in \R: \abs{\xi-x} \leq \min_{y \in \Gamma}
	\abs{\xi-y} \bigr\}.
\]
Such a partition is called {\it Voronoi-Partition} (of $\R$ related to
$\Gamma$).

In the one dimensional setting the {\em Voronoi cell} $C_{x_i}(\Gamma)$
generated by the ordered grid $\Gamma = \{x_1, \ldots, x_N\}$ consists simply of the interval
$[ \frac{x_{i-\!1} + x_i}{2},  \frac{x_{i} + x_{i+\!1}}{2} ]$. 
Nevertheless we
will use in this paper the more general notion of a Voronoi cell to
emphasize the underlying geometrical structure and the fact that this can also be defined in a
higher dimensional setting.

One shows (see \cite{Foundations}) that the infimum in (\ref{eq:DefQ}) actually
holds as an minimum: there exists an optimal quantization $\widehat
X^{\ast,N}$ (which takes exactly $N$ values if $X$ has infinite support).

Concerning the approximation of an expectation, first note that for $\Gamma :=
\widehat X(\Omega)$ we get
\begin{equation}\label{eq:cubature}
	\com{\E F(X) \approx }\E F(\widehat X) = \sum_{x\in\Gamma} F(x) \cdot
	\Prob(\widehat X = x),
\end{equation}
so that $\widehat X$ in fact induces a cubature formula with weights $\Prob(\widehat
	X = x),\, x \in \Gamma$.
This may provide a good approximation of $\E F(X)$, if $\widehat X$ is close to
the optimal solution of the quantization problem (\ref{eq:DefQ}).
	
\com{ Concerning the use of quantization as cubature formulae for the
approximation of expectations, we immediately derive for Lipschitz Functionals $F \in
C_\text{Lip}$}
For a Lipschitz functional $F \in
C_\text{Lip}(\R, \R)$ we immediately derive the error bound 
\[
\abs{\E F(X) - \E F(\widehat X)} \leq [F]_\text{Lip}\; \E \abs{X - \widehat X}.
\]

If moreover $F$ exhibits further smoothness properties, \ie $F \in C^1(\R)$
with Lipschitz derivative, we may establish for a quantization $\widehat X$
satisfying the stationarity property
\begin{equation}\label{eq:Stat}
\E(X|\widehat X) = \widehat X,
\end{equation}
\com{which is e.g. fulfilled by solutions to the optimal quantization problem
(\ref{eq:DefQ}),} a second order estimate (cf. \cite{pageSurv})
\[
	\abs{\E F(X) - \E F(\widehat X)} \leq [F']_\text{Lip}\; \E \abs{X - \widehat
	X}^2.
\]
Note that this stationarity property is always fulfilled if $\widehat X$ is
a solution to the optimal quantization problem (\ref{eq:DefQ}).

In view of the Zador Theorem (Thm 6.2 in \cite{Foundations}), which describes
the sharp asymptotics of the quantization problem (\ref{eq:DefQ}) as $N$ goes
to infinity, this leads to a quadratic error bound for an optimal quantization
$\widehat X^{\ast,N}$ of size $N$
\[
\abs{\E F(X) - \E F(\widehat X^{\ast, N})} \leq C_X\cdot [F']_\text{Lip} \cdot
N^{-2}.
\]

\com{As a matter of facts, }Unfortunately, in practice this stationarity
property (\ref{eq:Stat}) is only satisfied if $\widehat X$ is\com{ especially}
in some way optimized to ``fit'' the given distribution of $X$. This
optimization is time-consuming and due to the complicated structure of $X^n$ 
not feasible in our case of interest.

Hence we propose a (new) reverse interpolation operator to replace the nearest
neighbor projection, which offers an intrinsic stationarity and therefore leads
to a second order error bound without the need of adapting $\widehat X$ to the
exact distribution of $X$.

\subparagraph{Dual Quantization}
\com{Since }This alternative quantization approach for compactly supported
random variables is based on the Delaunay representation of a grid $\Gamma$,
which is the dual to its Voronoi diagram. Hence we will call this approach
{\it Dual Quantization}.

Suppose now to have an ordered grid $\Gamma$
\[
	a \leq x_1 \leq x_2 \leq \ldots \leq x_N \com{\leq x_{N+1}} \leq b
\]
for a r.v. $X$ with compact support included in $[a,b]$ (Typically, $[a,b]$ is
the convex hull of the support of $X$). Moreover we introduce for convenience
two auxiliary points $x_0 := a$ and $x_{N+1} := b$.

The {\it Delaunay tessellation} induced by $\Gamma$ then simply consists of the
line segments $\overline{x_j \, x_{j+1}},\, j = 0, \ldots, N+1$, where we
arbitrarily choose $\overline{x_j \, x_{j+1}}$ to be the half-open intervals
$[x_j, x_{j+1})$ for $j = 0, \ldots, N$ and $\overline{x_N \, x_{N+1}}$ as the
closed interval $[x_N , x_{N+1}]$. This way we arrive at a true partition of
the whole support of $X$.

To define a projection from $X(\Omega) \subseteq [a,b]$ onto $\Gamma$, we will not
just map any realization $X(\omega)$ to its nearest neighbor, but consider the two
endpoints of the line segment $\overline{x_{j^\ast}  x_{j^\ast\!+1}}$ into
which it falls.

We then perform a reverse random interpolation between these two points
$x_{j^\ast}, x_{j^\ast\!+1}$ in proportion to the ``barycentric coordinate''
\[
	\lambda := \frac{x_{j^\ast\!+1} - X(\omega)} {x_{j^\ast\!+1} -
	x_{j^\ast}},
\]
\ie we map $X(\omega)$ with probability $\lambda$  to $x_{j^\ast}$ and
with probability $(1-\lambda)$ to $x_{j^\ast\!+1}$ (see Figure
\ref{fig:revInt}). 
\begin{figure}[htbp]
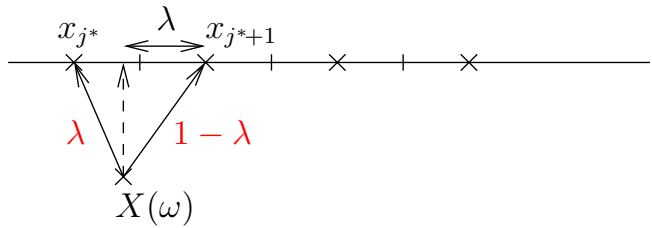

    \begin{center}
      \resizebox{8.5cm}{!}{\input revIntProj.pdf_t}
    \end{center}
\caption{Reverse random Interpolation Operator $J^\Lambda$}
\label{fig:revInt}
  \end{figure}

A formal definition of this operator \com{can be }is given as follows.

\begin{defn}
Let $\Lambda \sim \mathcal{U}([0.1])$ be a r.v. on some probability
space $(\tilde \Omega, \tilde{\mathcal{F}}, \tilde \Prob)$
and let $\Gamma = (x_1,
\ldots, x_{N}),\, x_0 := a,\, x_{N+1}:=b$ be an ordered of $[a,b]$.
The {\it Dual Quantization operator} $\Ip$ \com{for the ordered grid $\Gamma =
(x_1, \ldots, x_{N})$ }is defined by
  \[
  \xi \mapsto \Ip(\xi) =\sum_{j=0}^{N}
  \Bigl(x_j\ind{\bigl[0,\frac{x_{j+1}-\xi}{x_{j+1}-x_j}\bigr)}(\Lambda) +
  x_{j+1}\ind{\bigl[\frac{x_{j+1}-\xi}{x_{j+1}-x_j},1\bigr]}(\Lambda)  \Bigr)
  \ind{\overline{x_j\, x_{j+1}}}(\xi).
  \]
\end{defn}

\begin{remark}
Note that we can always enlarge the original probability space
$(\Omega, \mathcal{F}, \Prob)$ to ensure that $\Lambda$ is defined on this
space and is independent of any r.v. defined on the
original space. Therefore we may assume w.l.o.g. that $\Lambda$ is defined on $(\Omega,
\mathcal{F}, \Prob)$.
\end{remark}

For $\xi \in [a,b]$ and $\overline{x_{j^\ast}  x_{j^\ast\!+1}}$
denoting the line segment into which $\xi$ falls, we get
\begin{equation}\label{eq:distIpk}
 \Prob(\Ip (\xi) = x_{j^\ast}) =
 \frac{x_{j^\ast\!+1}-\xi}{x_{j^\ast\!+1}-x_{j^\ast}} \quad\text{and}\quad
 \Prob(\Ip (\xi) = x_{j^\ast\!+1}) =
 1 - \frac{x_{j^\ast\!+1}-\xi}{x_{j^\ast\!+1}-x_{j^\ast}},
\end{equation}
so that $\Ip$ satisfies the desired reverse interpolation property.

As already announced, this Dual Quantization operator fulfills naturally a
stationarity property:

\begin{prop}[Stationarity]\label{prop:stat}
For any grid $\Gamma = (x_1, \ldots, x_{N})$ it holds
\[
\E(\Ip\!(X)|X) = X.
\]
\end{prop}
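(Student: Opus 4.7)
The plan is to reduce the claim to a pointwise computation of $\E[\Ip(\xi)]$ for deterministic $\xi \in [a,b]$, and then invoke the independence of the auxiliary uniform randomization $\Lambda$ from $X$ to lift this to the conditional expectation given $X$.

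First I would fix $\xi \in [a,b]$ and let $\overline{x_{j^\ast}\,x_{j^\ast+1}}$ be the unique segment of the Delaunay tessellation containing $\xi$. By the explicit formula given in (\ref{eq:distIpk}), $\Ip(\xi)$ takes only the two values $x_{j^\ast}$ and $x_{j^\ast+1}$, with probabilities $\lambda := (x_{j^\ast+1}-\xi)/(x_{j^\ast+1}-x_{j^\ast})$ and $1-\lambda$ respectively. A direct computation then gives
\[
\E\bigl[\Ip(\xi)\bigr] = \lambda\, x_{j^\ast} + (1-\lambda)\, x_{j^\ast+1} = \xi,
\]
which is essentially just the statement that $\xi$ is recovered as the barycentric combination of the two endpoints with the interpolation weights.

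Next, I would use the remark following the definition of $\Ip$: we may assume $\Lambda$ is defined on the same probability space as $X$ and is independent of $X$. Since $\Ip(X)$ is a measurable function of $(X, \Lambda)$, the standard freezing lemma yields
\[
\E\bigl(\Ip(X)\,\big|\,X\bigr)(\omega) = \E\bigl[\Ip(\xi)\bigr]\Big|_{\xi = X(\omega)}.
\]
Combining this with the pointwise identity from the first step gives $\E(\Ip(X)|X) = X$, which is the desired stationarity.

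There is really no hard step here: the content of the proposition is built directly into the definition of the reverse-interpolation weights, and once independence of $\Lambda$ and $X$ is invoked the result is immediate. The only point requiring a modicum of care is ensuring that the partition $\{\overline{x_j\,x_{j+1}}\}_{j=0}^{N}$ genuinely covers the support of $X$ (so that the pointwise formula applies $\Prob$-a.s.), but this was arranged in the definition by taking $[a,b]$ to contain the support of $X$ and by the half-open/closed convention that makes the segments a true partition of $[a,b]$.
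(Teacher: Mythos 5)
Your proof is correct and follows essentially the same route as the paper: compute $\E[\Ip(\xi)]=\xi$ pointwise from the two-point reverse-interpolation distribution, then use independence of $\Lambda$ and $X$ (the freezing lemma) to identify the conditional expectation $\E(\Ip(X)\mid X)$ with $\E[\Ip(\xi)]\big|_{\xi=X}$. The only cosmetic difference is that you write the barycentric combination $\lambda x_{j^\ast}+(1-\lambda)x_{j^\ast+1}=\xi$ directly, whereas the paper carries out the algebra a bit more explicitly; the substance is identical.
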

\begin{proof}
Let $\xi \in [a,b]$ and denote by $\overline{x_{j^\ast}  x_{j^\ast\!+1}}$ the
line segment in which $\xi$ falls.
\com{The assertion then follows immediately from}Then note that
\begin{equation*}
\begin{split}
\com{\E(\Ip\!(X)|X=x) & = \E \biggl(\sum_{j=1}^{N-1}
  \Bigl(x_i\ind{\bigl[0,\frac{x_{j+1}-x}{x_{j+1}-x_j}\bigr)}(\Lambda) +
  x_{j+1}\ind{\bigl[\frac{x_{j+1}-x}{x_{j+1}-x_j},1\bigr]}(\Lambda)  \Bigr)
  \ind{[x_j, x_{j+1}]}(X) | X= x  \biggr) \\
  & = x_{j^\ast} \Prob\biggl(\Lambda \in
  \Bigl[0,\frac{x^\ast_{j+1}-x}{x^\ast_{j+1}-x^\ast_j}\Bigr)\biggr) +
  x_{j^\ast\!+1} \Prob\biggl(\Lambda \in 
  \Bigl[\frac{x^\ast_{j+1}-x}{x^\ast_{j+1}-x^\ast_j},1\Bigr] \biggr) \\ & =
  \frac{1}{x^\ast_{j+1}-x^\ast_j} \Bigl(
  (x^\ast_{j+1}-x^\ast_j)(x-x^\ast_{j+1}) + x^\ast_{j+1}(x^\ast_{j+1}-x^\ast_j)
  \Bigr)\\
  & = x.}
  \E(\Ip\!(\xi)) & = \E \biggl(\sum_{j=0}^{N}
  \Bigl(x_i\ind{\bigl[0,\frac{x_{j+1}-x}{x_{j+1}-x_j}\bigr)}(\Lambda) +
  x_{j+1}\ind{\bigl[\frac{x_{j+1}-x}{x_{j+1}-x_j},1\bigr]}(\Lambda)  \Bigr)
  \ind{\overline{x_j, x_{j+1}}}(\xi) \biggr) \\
  & = x_{j^\ast} \cdot \Prob(\Ip (\xi) = x_{j^\ast}) +
  x_{j^\ast\!+1} \cdot \Prob(\Ip (\xi) = x_{j^\ast\!+1}) \\
  & =  \frac{1}{x_{j^\ast\!+1}-x_{j^\ast}} \Bigl(
  (x_{j^\ast\!+1}-x_{j^\ast})(\xi-x_{j^\ast\!+1}) +
  x_{j^\ast\!+1}(x_{j^\ast\!+1}-x_{j^\ast}) \Bigr)\\
  & = \xi.
\end{split}
\end{equation*}

The conclusion now follows from the independence of $X$ and $\Lambda$ which
implies
\[
\E(\Ip(X) | X) = \E(\Ip(\xi))_{|\xi = X} = X.
\]
\end{proof}

Similar to the primal Quantization setting we then derive by means of the
stationarity a second order estimate for the weak approximation of smoother
 integrands.

\begin{prop}\label{prop:secOrder}
  Let $F\in\mathcal{C}^1(\R)$ with Lipschitz derivative. Then every grid
  $\Gamma$ yields
\[
\abs{\E F(X) - \E F(\Ip\!(X))  } \leq [F']_\text{Lip}\; \E \abs{X - \Ip\!(X)
}^2.
\]
\end{prop}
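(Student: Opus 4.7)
The plan is to obtain the second order error by combining a first order Taylor expansion of $F$ around $X$ with the stationarity property of $\Ip$ established in Proposition~\ref{prop:stat}. This is the standard device that turns a stationary approximation into a quadratic bound on the weak error.

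First, I would invoke the Taylor formula with integral remainder: since $F \in \mathcal{C}^1(\R)$ with Lipschitz derivative,
\[
F(y) - F(x) - F'(x)(y-x) = \int_0^1 \bigl(F'(x + t(y-x)) - F'(x)\bigr)(y-x)\, dt,
\]
which yields the pointwise estimate
\[
\bigl|F(y) - F(x) - F'(x)(y-x)\bigr| \leq [F']_\text{Lip}\,(y-x)^2.
\]
Applying this with $x = X$ and $y = \Ip(X)$ and taking expectations gives
\[
\Bigl| \E F(\Ip(X)) - \E F(X) - \E\bigl[F'(X)(\Ip(X) - X)\bigr] \Bigr| \leq [F']_\text{Lip}\,\E\abs{X - \Ip(X)}^2.
\]

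The crux is then to show that the cross term vanishes. Since $F'(X)$ is $\sigma(X)$-measurable, conditioning on $X$ gives
\[
\E\bigl[F'(X)(\Ip(X) - X)\bigr] = \E\bigl[F'(X)\,\E(\Ip(X) - X\,|\,X)\bigr],
\]
and the stationarity property $\E(\Ip(X)\,|\,X) = X$ from Proposition~\ref{prop:stat} immediately forces this term to be zero. Combined with the Taylor estimate, this yields the claimed bound.

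There is no genuine obstacle beyond a small technical point: one must ensure that $F'(X)$ is integrable so that the conditional expectation argument is legitimate. Since $X$ is compactly supported in $[a,b]$ and $F'$ is continuous, $F'(X)$ is bounded, so this causes no trouble. (If one were more careful about constants, the Taylor remainder actually gives $\frac{1}{2}[F']_\text{Lip}\,(y-x)^2$, but the statement quotes the weaker constant $[F']_\text{Lip}$, which certainly holds.)
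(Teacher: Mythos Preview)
Your argument is correct and follows essentially the same route as the paper: a first-order Taylor expansion around $X$ together with the stationarity $\E(\Ip(X)\mid X)=X$ to kill the cross term, then take expectations. The only cosmetic difference is that the paper conditions on $X$ before taking expectations whereas you take expectations first and then condition to handle the linear term; both orderings yield the same bound.
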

\begin{proof}
From a Taylor expansion we derive
\[
	\bigabs{ F( \Ip(X)) - F(X) - F'(X)(\Ip(X) - X) } \leq [F']_\text{Lip}\;  \abs{X
	- \Ip\!(X) }^2
\]
so that the stationarity property (Proposition \ref{prop:stat}) implies
\[
	\bigabs{ \E \bigl( F( \Ip(X)) | X \bigr)  - F(X) } \leq
	[F']_\text{Lip}\; \E \bigl( \abs{X - \Ip\!(X) }^2 | X \bigr).
\]
Taking expectations then yields the assertion.
\end{proof}

\subsection{Application to the approximation of the inhomogeneous random walk}
\subsubsection{The algorithm}
We are now in the position to design an approximation scheme based on Dual
quantization in which the general projection operator $\Pik$ is replaced by the
dual quantization operator $\Ipk$. Let $\Gamma_1, \ldots, \Gamma_n$ be
some ordered grids. We set
\com{Coming back to the original approximation problem the
Dual Quantization scheme for the random walk $(X^k)_{1\leq k \leq n}$ now reads}
\begin{equation}\label{def:DQS}
   \begin{split}
     \widehat X^0 & := 0\\
     \widehat X^k & := \Ipk\!(\widehat X^{k-1} + \alpha_k Z_k), \qquad k = 1, \ldots,
n \\
  \end{split}
\end{equation}
\com{with }for $\Lambda_k\sim \mathcal{U}([0,1])$ i.i.d and independent of
$(Z_k)_{0\leq k \leq n}$.

We wish to approximate $\E F(X^n)$ by its dually quantized counterpart $\E
F(\widehat X^n)$.

\subsubsection{Error bound for the approximation of $\E F(\widehat X^n)$}

Concerning the approximation power of the dual quantization scheme
(\ref{def:DQS}) for $\E F(X^n)$ with $F\in \mathcal{C}^1_\text{Lip}(\R)$, 
we immediately derive
from Proposition \ref{prop:secOrder} the following local error bound for any
grid $\Gamma_k$ \com{quality of the expectation $\E F(X^n)$ for $F\in \mathcal{C}^1_\text{Lip}$\com{ integrands $F:\R\to \R$}, we immediately get
from Proposition \ref{prop:secOrder} for any grid $\Gamma_k$ the local error bound}
\[
\abs{ \E F(\widehat X^{k-1}+\alpha_k Z_k) - \E F(\widehat X^k) } \leq
[F']_\text{Lip}\;  \E \abs{ (\widehat X^{k-1}+\alpha_k Z_k ) - \widehat X^k 
}^2,
\]
since $\widehat X^k = \Ipk\!(\widehat X^{k-1} + \alpha_k Z_k).$

As a matter of fact, the global error then consists of all the local
insertion errors of the quantization operator along the random walk
$(X^k)_{1\leq k \leq n}$.\com{Then using the usual backward induction for the
BDP-principle leads to a global error estimate for this quantization scheme}

\begin{thm}[Global Error Bound]\label{thm:glbErr}
Let $F\in \mathcal{C}^1(\R)$ with Lipschitz derivative. \com{Then it holds for
any grid $\Gamma_k$ and the Dual Quantization scheme (\ref{def:DQS})}
Then the Dual Quantization scheme (\ref{def:DQS}) related to the grids
$\Gamma_k, 1\leq k \leq n$, satisfies
  \begin{equation*}
    \begin{split}
      \bigabs{\E F(X^{n}) - \E F(\widehat X^n) } & \leq [F']_\text{Lip}\;
      \sum_{k=1}^n \E \bigabs{(\widehat X^{k-1}+\alpha_k Z_k) - \widehat X^k 
      }^2\\
      & = [F']_\text{Lip}\; \sum_{k=1}^n \E \bigabs{ (\widehat
      X^{k-1}\!\!+\!\alpha_k Z_k) - \Ipk\!(\widehat
        X^{k-1}\!\!+\!\alpha_k Z_k)  }^2.
    \end{split}
  \end{equation*}
\end{thm}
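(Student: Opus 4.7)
The plan is to reduce the global bound to a telescoping sum of local quantization errors, each of which is controlled by Proposition \ref{prop:secOrder} applied to a smoothed version of $F$.

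First, I would introduce the hybrid processes
\[
	\bar X^{k,n} := \widehat X^k + \sum_{j=k+1}^n \alpha_j Z_j, \qquad k = 0,\ldots,n,
\]
so that $\bar X^{0,n} = X^n$ (because $\widehat X^0 = 0$) and $\bar X^{n,n} = \widehat X^n$. Writing the difference as a telescoping sum,
\[
	\E F(X^n) - \E F(\widehat X^n) = \sum_{k=1}^n \bigl( \E F(\bar X^{k-1,n}) - \E F(\bar X^{k,n}) \bigr),
\]
the task reduces to bounding each term of the sum by the corresponding local error $[F']_{\text{Lip}}\,\E|(\widehat X^{k-1} + \alpha_k Z_k) - \widehat X^k|^2$.

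For the $k$-th term, set $S_k := \sum_{j=k+1}^n \alpha_j Z_j$ and $W_k := \widehat X^{k-1} + \alpha_k Z_k$, so that $\bar X^{k-1,n} = W_k + S_k$ and $\bar X^{k,n} = \Ipk(W_k) + S_k$. Introduce the conditional integrand
\[
	G_k(x) := \E F(x + S_k), \qquad x \in \R.
\]
By construction $S_k$ is a function of $Z_{k+1},\ldots,Z_n$ and is therefore independent of the triple $(W_k, \Lambda_k, \widehat X^k)$. Hence, conditioning on everything but $S_k$ yields
\[
	\E F(\bar X^{k-1,n}) = \E G_k(W_k), \qquad \E F(\bar X^{k,n}) = \E G_k(\widehat X^k) = \E G_k(\Ipk(W_k)).
\]
Since $F \in \mathcal{C}^1$ with Lipschitz derivative, differentiation under the expectation gives $G_k'(x) = \E F'(x + S_k)$, and then $\abs{G_k'(x) - G_k'(y)} \leq \E \abs{F'(x+S_k) - F'(y+S_k)} \leq [F']_{\text{Lip}}\,|x-y|$, so $[G_k']_{\text{Lip}} \leq [F']_{\text{Lip}}$.

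The r.v.\ $\Lambda_k$ is independent of $W_k$ (it is independent of the $Z_j$'s and of the other $\Lambda_j$'s used in building $\widehat X^{k-1}$), so Proposition \ref{prop:secOrder} applies to the pair $(G_k, W_k)$ and yields
\[
	\bigabs{\E G_k(W_k) - \E G_k(\Ipk(W_k))} \leq [G_k']_{\text{Lip}}\; \E \bigabs{W_k - \Ipk(W_k)}^2 \leq [F']_{\text{Lip}}\; \E \bigabs{W_k - \widehat X^k}^2.
\]
Plugging this into the telescoping identity and using the triangle inequality gives the announced bound, and the second expression in the statement follows by simply recalling that $\widehat X^k = \Ipk(\widehat X^{k-1} + \alpha_k Z_k)$.

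The only subtle point, and the one I would be most careful about, is the independence bookkeeping needed to justify $\E F(x + S_k)|_{x = W_k} = \E(F(W_k + S_k)|W_k)$ and the analogous identity with $\widehat X^k$: once it is made explicit that the innovations $(Z_{k+1},\ldots,Z_n)$ are independent of $(Z_1,\ldots,Z_k, \Lambda_1,\ldots,\Lambda_k)$, the rest is the standard Lipschitz/Taylor computation already packaged inside Proposition \ref{prop:secOrder}.
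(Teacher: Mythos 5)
Your proof is correct and follows essentially the same route as the paper: a telescoping decomposition along the hybrid processes $\widehat X^k + \sum_{j>k}\alpha_j Z_j$, reduction of each term to a single application of Proposition~\ref{prop:secOrder}, and an independence argument to handle the future innovations. The only cosmetic difference is that you package the averaging over $S_k$ into a new integrand $G_k(x)=\E F(x+S_k)$ and verify $[G_k']_{\text{Lip}}\le[F']_{\text{Lip}}$, whereas the paper conditions on $Z=z$, applies the local bound pointwise in $z$, and then integrates; the two are mathematically equivalent and neither is more general than the other.
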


\begin{proof}
First note that it follows from Proposition \ref{prop:secOrder} that for any
$\alpha \in \R$
\[
	\bigabs{\E F(X+\alpha) - \E F(\Ip(X) + \alpha)} \leq [F']_\text{Lip} \E
	\bigabs{X - \Ip(X)}^2.
\]
Consequently, we get for any r.v. $Z$ independent of $X$
\begin{equation*}
\begin{split}
\Bigabs{\E \Bigl[F(X  + & Z)   | Z=z\Bigr] - \E \Bigl[ F(\Ip(X) + Z)|Z=z \Bigr]}
\\ 
&  = \bigabs{\E F(X\!+\!z)  - \E F(\Ip(X) + z)} 
  \leq   [F']_\text{Lip} \E
	\bigabs{X - \Ip(X)}^2\\
\end{split}
\end{equation*}
and thus
\[
\bigabs{\E F(X+Z)  - \E F(\Ip(X) + Z)} 
  \leq   [F']_\text{Lip} \E
	\bigabs{X - \Ip(X)}^2.
\]
This finally yields
\begin{equation*}
\begin{split}
\bigabs{ \E F(\widehat X^n) - \E F(X^{n}) } & \leq \sum_{k=1}^n  \Bigabs{ \E
	F\bigl( \widehat X^k + \sum_{l=k+1}^n \!\alpha_l Z_l \bigr) - \E
	F\bigl( \widehat X^{k-1} + \sum_{l=k}^n \alpha_l Z_l \bigr)}\\
& = \sum_{k=1}^n  \Bigabs{ \E
	F\bigl( \Ipk(\widehat X^{k-1} \!+\! \alpha_k Z_k ) + \sum_{l=k+1}^n\! \alpha_l
	Z_l \bigr) \\
	& \qquad \qquad \quad - \,\E F\bigl( \widehat X^{k-1} \!+\! \alpha_k
	Z_k + \sum_{l=k+1}^n \!\alpha_l Z_l \bigr)}\\
& \leq [F']_\text{Lip}\; \sum_{k=1}^n \E \bigabs{ (\widehat
      X^{k-1}\!\!+\!\alpha_k Z_k) - \Ipk\!(\widehat
        X^{k-1}\!\!+\!\alpha_k Z_k)  }^2 \\
& = [F']_\text{Lip}\;
      \sum_{k=1}^n \E \bigabs{(\widehat X^{k-1}+\alpha_k Z_k) - \widehat X^k 
      }^2. 
\end{split}
\end{equation*}
\end{proof}

\subsection{Optimal choice of the grid $\Gamma$}\label{sec:grids}

\com{Concerning an optimal choice of the grids $\Gamma_k$ with respect to the
Dual Quantization operator $\Ip$, we arrive at the optimization problem
\begin{equation}\label{eq:DualOpt}
\E \abs{\Ip\!(X^k)-X^k  }^2 \to \min_{\Gamma \subset [a,b], \abs{\Gamma}\leq
N_k}
\end{equation}

In fact, the optimal quantization error achieved by the dual approach differs
from the original quantization for large $N$ only by a constant}

Let us temporarily come back to a static problem for an abstract random
variable $X$ with $\Prob(X\in [a,b]) = 1$.
In view of the second order estimate from Proposition \ref{prop:secOrder}, we
arrive for a fixed number $N\in\N$ at the optimization problem

\begin{equation}\label{eq:DualOpt}
\E \abs{X - \Ip\!(X)}^2 \to \inf_{\Gamma \subset [a,b], \abs{\Gamma}\leq
N}.
\end{equation}

It is established in \cite{dual} that this infimum actually stands as a
minimum. Hence optimal dual quantizers exists.
Moreover the mean dual quantization error achieved by such an optimal grid
differs from mean optimal quantization error of the primal quantization problem
(\ref{eq:DefQ}) asymptotically only by a constant.

\begin{thm}[Optimal rate (\cite{dual, Foundations})]\label{thm:optRate}
Let $X$ be a r.v. with $\Prob(X\in [a,b]) = 1$ and continuous density $\varphi$.
Then it holds
\[
	\lim_{N\to \infty} N^2 \inf_{\substack{\Gamma\subset [a,b]\\ \abs{\Gamma}\leq
	N}} \E \abs{X - \Ip(X)}^2 = 2 \lim_{N\to \infty}  N^2
	\inf_{\substack{\Gamma\subset [a,b]\\ \abs{\Gamma}\leq N}} \E
	\min_{x\in\Gamma} \abs{X-x}^2 = \frac{1}{6} 
	\biggl(\int_a^b \abs{\varphi(z)}^{3/2} dz
	\biggr)^{4/3}\com{\norm{\varphi}_{2/3}^2}.
\]

\end{thm}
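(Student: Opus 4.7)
The strategy is to combine an exact computation of the local dual quantization error with a standard asymptotic argument in the spirit of Zador's theorem, using the latter (already cited as Theorem 6.2 of \cite{Foundations}) as a black box for the primal rate. First, from the two-point distribution of $\Ip(\xi)$ given by (\ref{eq:distIpk}) and the stationarity $\E(\Ip(\xi)) = \xi$ established in Proposition \ref{prop:stat}, a direct computation yields
\[
\E\bigabs{\xi - \Ip(\xi)}^2 = \frac{x_{j+1}-\xi}{x_{j+1}-x_j}(\xi - x_j)^2 + \frac{\xi - x_j}{x_{j+1}-x_j}(x_{j+1}-\xi)^2 = (\xi - x_j)(x_{j+1}-\xi)
\]
for $\xi \in [x_j, x_{j+1}]$, and integrating against $\varphi$,
\[
\E\bigabs{X - \Ip(X)}^2 = \sum_{j=0}^{N} \int_{x_j}^{x_{j+1}} (\xi - x_j)(x_{j+1}-\xi)\varphi(\xi)\,d\xi.
\]
Comparing cell-by-cell with the primal (nearest-neighbor) error, on an interval where $\varphi$ is essentially constant the dual integral evaluates to $\varphi\,h^3/6$ while the primal integral evaluates to $\varphi\,h^3/12$: this discrepancy is the source of the factor $2$ in the statement.

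For the upper bound, I would construct an explicit admissible sequence of grids. Setting $\psi^\ast(x) := \varphi(x)^{1/3}/\int_a^b \varphi^{1/3}$ (the Zador-optimal point density), define $\Gamma_N = \{\Psi^{-1}(j/N):\, 1\leq j \leq N\}$, where $\Psi$ is the CDF of $\psi^\ast$. On each cell of width $h_j^{(N)} \sim (N \psi^\ast(x_j^{(N)}))^{-1}$, the continuity of $\varphi$ lets me replace $\varphi$ by its value at $x_j^{(N)}$ with negligible error; using $\int_0^h u(h-u)\,du = h^3/6$ the cell contribution becomes $\varphi(x_j^{(N)})(h_j^{(N)})^3/6 + o((h_j^{(N)})^3)$. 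A Riemann-sum argument then yields
\[
N^2\, \E\bigabs{X - \Ip(X)}^2 \longrightarrow \frac{1}{6}\int_a^b \frac{\varphi(x)}{\psi^\ast(x)^2}\,dx,
\]
and H\"older's inequality identifies $\psi^\ast$ as the minimizer, reproducing exactly twice the classical Zador constant.

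The main obstacle is the matching lower bound along arbitrary grid sequences. Since the cell-by-cell inequality ``dual $\geq 2\times$ primal'' fails pointwise whenever $\varphi$ varies within a cell, a direct comparison is not available; only asymptotic equality emerges. The cleanest route is the Bucklew--Wise / Graf--Luschgy approach: for an asymptotically optimal sequence $\Gamma_N$ of dual quantizers (existence is guaranteed by \cite{dual}), form the normalized empirical measure $\mu_N := N^{-1}\sum_{x\in\Gamma_N}\delta_x$, extract a weakly convergent subsequence $\mu_{N_k} \to \mu$, and decompose $\mu$ into absolutely continuous and singular parts $\mu = \psi\,\lambda_{[a,b]} + \mu_s$. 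A standard localization argument (blocks of $k_N\to\infty$ consecutive grid points) combined with lower semi-continuity of $\psi \mapsto \int\varphi/\psi^2$ shows that the singular part contributes nothing asymptotically, and H\"older's inequality $\int\varphi/\psi^2 \geq \bigl(\int \varphi^{1/3}\bigr)^3$ under $\int \psi = 1$ delivers the matching lower bound. Combined with the upper bound, this identifies the limit as twice the primal Zador limit and completes the proof.
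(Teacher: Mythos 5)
Your proof is mathematically sound in outline, but it takes a genuinely different route from the one the paper refers to (the statement is imported from \cite{dual} and \cite{Foundations}, not proved in the text). You re-derive the dual rate from scratch: the exact local identity $\E\bigl|\xi - \Ip(\xi)\bigr|^2 = (\xi - x_j)(x_{j+1}-\xi)$ is correct (expand $\lambda(\xi-x_j)^2 + (1-\lambda)(x_{j+1}-\xi)^2$ with $\lambda$ the barycentric coordinate and simplify), the companion-function upper bound via quantile grids with point density $\varphi^{1/3}$ is the standard Zador construction, and the Bucklew--Wise empirical-measure localization is the standard lower bound machinery. The paper instead points, in the discussion around \eqref{eq:dualGrid}, to the midpoint map $x_j = (y_j+y_{j+1})/2$ from optimal primal grids to asymptotically optimal dual ones, and says the proof in \cite{dual} uses this bijection to transfer Zador's constant. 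That route is shorter because it inherits the primal lower bound rather than re-running the localization, at the price of checking that the midpoint transform preserves asymptotic optimality; your route is self-contained, makes the factor $2$ emerge as a by-product of $\int_0^h u(h-u)\,du = h^3/6$ versus $\int_0^h (u-h/2)^2\,du = h^3/12$, and needs only the primal asymptotics as a sanity check rather than as an ingredient. One thing worth flagging: your computation yields the limit $\tfrac{1}{6}\bigl(\int_a^b \varphi^{1/3}\bigr)^{3}$, which is indeed twice the Zador constant $\tfrac{1}{12}\bigl(\int \varphi^{1/3}\bigr)^{3}$ of Theorem 6.2 in \cite{Foundations}; the exponents $\bigl(\int \varphi^{3/2}\bigr)^{4/3}$ displayed in the statement do not match this and appear to be a typographical slip in the paper --- your formula is the correct one.
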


\begin{remark}
This theorem about the asymptotics of the Dual Quantization problem can also be
generalized to non compactly supported r.v.'s. and to non quadratic mean error
(see \cite{dual}).
\end{remark}

\com{Unfortunately the generation of optimal grids for $X^k, k = 1, \ldots, n$
is out of reach so that we have to make a ``slightly'' sub-optimal decision:
We will choose $\Gamma_k $ to be grids which are optimal for a normal
distribution matching the first two moments of $X^k$,
since such a $\Normal(\mu_k, \sigma_k^2)$ distribution is for larger $k$ not
very far away from $X^k$.

Moreover our numerical observations seem to confirm the optimal $N^2$-rate for
these grids, which justifies our choice and suggest that it is more important
to construct a quantization scheme which is capable of generating the proper
weights for each of the grid points, than using the exact optimal grid within
this kind of discretization.}

Given the formula of the gradient and the hessian of the optimization problem
(\ref{eq:DualOpt}) with regard to $\Gamma$ a Newton algorithm similar to the one
described in \cite{pagesOQ} can be employed to construct numerically optimal
dual quantization grids.

\com{Concerning the numerical solution to the optimization problem
(\ref{eq:DualOpt}) a Newton algorithm as similar to the one described in
\cite{pagesOQ} can be employed.}

Nevertheless, a straightforward alternative is to derive an (only asymptotically
optimal) dual grid from a grid which is optimal for the primal quantization problem
(\ref{eq:DefQ}). Such grids are precomputed (cf. \cite{Website}) and
online available at
\begin{center}
{\tt www.quantization.math-fi.com}
\end{center}

To transform these regular quantization grids into dual ones,
we consider its midpoints, \ie if $y_1, \ldots, y_N$ denote an optimal grid for
the primal quantization problem (\ref{eq:DefQ}), we simply define its dual grid
\begin{equation}\label{eq:dualGrid}
	x_j := \frac{y_j + y_{j+1}}{2}, \quad j = 1, \ldots, N-1
\end{equation}
\com{for its dual grid.}
This choice is motivated by the asymptotic formula of Theorem
\ref{thm:optRate} and its proof in \cite{dual}, where exactly this midpoint rule
establishes a connection between dual and regular quantization.
Moreover, this connection allows to deduce the optimal rate for the dual
quantizers from that for regular quantizers.\\

Coming back to the problem of interest in this paper, the construction of
optimal (primal or dual) grids for each  $X^k, k = 1, \ldots, n$
is clearly out of reach so that we have to make a ``slightly'' sub-optimal
decision: we will choose grids which are optimal for a normal
distribution matching the first two moments of $X^k$,
since such a $\Normal(\mu_k, \sigma_k^2)$ distribution is close to $X^k$ for
large values of $k$. Additionally we can restrict these grids to the convex
hull of the support of $X^k$, \ie $[0, \sum_{i=1}^k \alpha_i]$.

Moreover, our numerical observations even tend to confirm an optimal
$N^{-2}$-rate for these sub-optimal grids. This emphazises again the importance of the
intrinsic stationarity provided by the dual quantization operator $\Ip$
in contrast to its primal counterpart, the nearest neighbor projection, where
the stationarity only holds for grids specially optimized for the true
underlying distribution, \ie the r.v. $X^k$ in our case.

\section{Numerical implementation and results}

\subsection{Numerical Implementation}

We now present numerical results and notes on the implementation of the Dual
Quantization scheme (\ref{def:DQS}) for the approximation of 
\begin{equation}\label{eq:Expectation}
	\E \Bigl( \sum_{i=1}^n \alpha_i Z_i - K \Bigr)_+,
\end{equation}
by means of
\begin{equation}\label{eq:approx}
	\E ( \widehat X^n - K)_+.
\end{equation}

Concerning the second order estimate of Theorem \ref{thm:glbErr}, the call
function $x \mapsto x_+$ clearly does not satisfy the assumptions of a
continuously differentiable function with Lipschitz derivative.
Nevertheless\com{it is always possible to approximate $(x)^+$ by a set of
functions which fulfill the smoothness properties of Theorem \ref{thm:glbErr}, which
explains why obtain in our numerical experiments still the optimal second order
rate of $N^{-2}$.}
we can replace $x_+$ by $\varphi_\varepsilon(x) := \E(x+\varepsilon Y)_+$, where
$Y\sim \Normal(0,1)$ and $\varepsilon > 0$ to overcome this shortcoming. This
function satisfies $\abs{\varphi_\varepsilon(x) - x_+}\leq \varepsilon,\,
\varphi_\varepsilon \in \mathcal{C}^\infty(\R)$ and $0 \leq \varphi'_\varepsilon \leq 1$. 
Furthermore $\varphi_\varepsilon$ writes $\varphi_\varepsilon(x) = x\cdot
\Phi(\frac{x}{\varepsilon}) + \frac{\varepsilon}{\sqrt{2\pi}} e^{-\frac{x^2}{2
\varepsilon^2}}$, where $\Phi$ is the distribution function of the standard
normal distribution.
\\

We could imagine to compute $\E(\widehat X^n - K)_+$ using a {\em backward}
dynamic programming formula based on (\ref{def:DQS}). However such an approach is
``payoff'' dependent and consequently time-consuming since the computation
needs to be done for many values of $K$ as emphasized in the introduction.

An alternative is to directly rely on the cubature formula
\[
\E (\widehat X^n - K)_+ = \sum_{i = 0}^{N_n+1} (x^n_i - K)_+ \cdot
	\Prob(\widehat X^n = x^n_i	)
\]
to approximate (\ref{eq:Expectation}). Here $\Gamma_n = \{ x_1^n,\ldots,
x_{N_n}^n \}$ is a dual grid of a normal distribution as described by
(\ref{eq:dualGrid})  in section \ref{sec:grids} and we set $x_0^n := 0,\,
x^n_{N_n+1} := \sum_{i=1}^n \alpha_i$.

The main task is then to compute the weights $\Prob(\widehat X^n = x_j^n)$ for
$1 \leq j \leq N_n$, which are given by the following {\em forward} recursive
formula.
\begin{prop}\label{prop:weights} 
In the dual quantization scheme (\ref{def:DQS}) the weights $\Prob(\widehat X^k
= x_l^k ) $ satisfy\com{
 For the weights $\Prob(\widehat X^k = x_l^k ) $ it holds within the
dual quantization scheme (\ref{def:DQS})}
\begin{equation*}
\begin{split}
	\Prob(\widehat X^k = x_l^k ) = \sum_{j=0}^{N_k+1} \Bigl[ ( 1 & - p_k)\cdot
	\lambda_l^k(x_j^{k-\!1}) \cdot \Prob(\widehat X^{k-\!1} = x^{k-\!1}_j) \\
	  & + p_k \cdot
	\lambda_l^k(x_j^{k-\!1}\! + \alpha_k) \cdot \Prob(\widehat X^{k-\!1} =
	x^{k-\!1}_j) \Bigr],
	\end{split}
\end{equation*}
where
\[
	\lambda_l^k(\xi) = \Prob\bigl(\Ipk(\xi) = x_l^k\bigr) = \begin{cases}                                                                                                                                          
                                               \frac{x_{j^\ast\!+1}-\xi}{x_{j^\ast\!+1}-x_{j^\ast}}, & \text{if } x_l^k = x_{j^\ast}^k\\
                                               1 - \lambda^k_{l-\!1}(\xi), &  \text{if } x_l^k = x_{j^\ast+\!1}^k\\
                                               0 & \text{otherwise} 
                                               \end{cases},
\]
and $j^\ast := j^\ast(\xi)$ denoting the line segment, which satisfies $\xi
\in \overline{x^k_{j^\ast}\, x^k_{j^\ast+1}}$.
\end{prop}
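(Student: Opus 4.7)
The proof is a straightforward conditional‐expectation calculation, so the plan is really just to unwind the dual quantization operator $\Ipk$ one step at a time. The key structural facts to invoke are (i) the definition $\widehat X^k = \Ipk(\widehat X^{k-1}+\alpha_k Z_k)$, (ii) the distribution of $\Ipk(\xi)$ for fixed $\xi$ given by formula (\ref{eq:distIpk}), which can be written as $\Prob(\Ipk(\xi)=x_l^k)=\lambda_l^k(\xi)$ with $\lambda_l^k$ as in the statement, and (iii) the independence of $\Lambda_k$ from both $\widehat X^{k-1}$ and $Z_k$, together with the independence of $Z_k$ from $\widehat X^{k-1}$ (since the latter is a functional of $Z_1,\ldots,Z_{k-1}$ and $\Lambda_1,\ldots,\Lambda_{k-1}$ only).

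First I would condition on the pair $(\widehat X^{k-1},Z_k)$. Since $\Lambda_k$ is independent of this pair, the distribution of $\Ipk(\widehat X^{k-1}+\alpha_k Z_k)$ given $\{\widehat X^{k-1}=x_j^{k-1},\,Z_k=z\}$ is just the distribution of $\Ipk(\xi)$ with the deterministic argument $\xi=x_j^{k-1}+\alpha_k z$. By (\ref{eq:distIpk}) this yields
\[
\Prob\bigl(\widehat X^k=x_l^k\,\big|\,\widehat X^{k-1}=x_j^{k-1},\,Z_k=z\bigr)=\lambda_l^k(x_j^{k-1}+\alpha_k z).
\]

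Next I would sum over the joint law of $(\widehat X^{k-1},Z_k)$. Using the independence of $Z_k$ and $\widehat X^{k-1}$, the joint probability factorises as $\Prob(\widehat X^{k-1}=x_j^{k-1})\,\Prob(Z_k=z)$, and summing the two values $z\in\{0,1\}$ with weights $1-p_k$ and $p_k$ gives exactly the announced formula, the sum running over $j=0,\ldots,N_{k-1}+1$ (including the two auxiliary endpoints so as to cover the whole support of $\widehat X^{k-1}+\alpha_k Z_k$).

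There is essentially no obstacle; the only thing to keep track of is that for each $\xi$, $\lambda_l^k(\xi)$ is nonzero for at most two indices $l$ (the endpoints of the Delaunay segment containing $\xi$), which justifies the piecewise definition in the statement but plays no role in the verification of the identity itself.
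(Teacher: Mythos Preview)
Your proof is correct and follows essentially the same route as the paper: condition on $\widehat X^{k-1}$ (and $Z_k$), use the independence of $\Lambda_k$ and $Z_k$ from $\widehat X^{k-1}$ to replace the random argument of $\Ipk$ by a deterministic one, and then invoke (\ref{eq:distIpk}). The only cosmetic difference is that the paper conditions first on $\widehat X^{k-1}$ alone and then splits on $Z_k$, whereas you condition on the pair $(\widehat X^{k-1},Z_k)$ in one step; the substance is identical.
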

\begin{proof}
 We clearly have 
 \begin{equation*}
 \begin{split}
 \Prob(\widehat X^k = x_l^k ) & = \sum_{j=0}^{N_k+1} \Prob(\widehat X^k =
 x_l^k | \widehat X^{k-\!1} =	x^{k-\!1}_j ) \cdot \Prob(\widehat X^{k-\!1} =
 x^{k-\!1}_j)\\ 
 & = \sum_{j=0}^{N_k+1} \Prob(\Ipk(X^{k-1} + \alpha_k Z_k ) =
 x_l^k | \widehat X^{k-\!1} =	x^{k-\!1}_j ) \cdot \Prob(\widehat X^{k-\!1} =
 x^{k-\!1}_j).
 \end{split}
 \end{equation*}
 
 Since $\Lambda_k, Z_k$ are independent of $\widehat X^{k-1}$, we derive
 \begin{equation*}
 \begin{split}
 \Prob(\Ipk(X^{k-1} + \alpha_k Z_k ) = x_l^k | \widehat X^{k-\!1} =	x^{k-\!1}_j ) 
 & = \Prob(\Ipk(x^{k-\!1}_j + \alpha_k Z_k ) = x_l^k  )\\ 
 & = (1 - p_k) \cdot \Prob(\Ipk(x^{k-\!1}_j ) = x_l^k  )\\
 & \quad\quad +  p_k \cdot \Prob(\Ipk(x^{k-\!1}_j + \alpha_k ) = x_l^k  )
 \end{split}
 \end{equation*}
so that finally (\ref{eq:distIpk}) yields the assertion.
\end{proof}

\subparagraph{Practical implementation}
From an implementational point of view we process the quantization scheme
(\ref{def:DQS}) and \com{
This can be done by processing the quantization scheme
(\ref{def:DQS}) in a forward way.  

\com{

From  (\ref{eq:cubature}) it follows that we approximate
(\ref{eq:Expectation}) by the cubature formula
\[
	\E ( \widehat X^n - K)^+ = \sum_{j=1}^{N_n} ( x_j^n - K)^+ \cdot
	\Prob(\widehat X^n = x_j^n),
\] 
where we choose $\Gamma_n = \{x_1^n, \ldots, x_{N_n}^n\}$ as the dual grid of a
normal distribution as described in section \ref{sec:grids}.

Hence it remains to compute the weights $P(\widehat X^n = x_j^n)$ for
$1 \leq j \leq N_n$. This is done while processing the quantization scheme
(\ref{def:DQS}). 
}

Therefore }we start with
\[
	\widehat X^0 = 0,
\]http://mathema.tician.de/dl/pub/pycuda-mit.pdf
i.e. a grid $\Gamma_0 = \{ x_1^0\} = \{ 0 \}$ and weight 
\[
	w_1^0 :=\Prob(\widehat X^0 = x_1^0) = 1.
\]

To pass from time $k-1$ to $k$, we suppose to have grids $\Gamma_k = \{
x_1^k,\ldots, x_{N_k}^k \}$ as described above\com{as dual grid of a normal
distribution matching the first two moments of $X^k$}. Additionally, we add the
endpoints $x_0^k := 0$ and $x^k_{N_k+1} := \sum_{i=1}^k \alpha_i$ and define $\overline \Gamma_k :=
\Gamma_{k} \cup \{x_0^{k},x_{N_{k}+1}^{k}\}$. 
Moreover we assume that the weights 
\[
	w_j^{k-1} = \Prob(\widehat	 X^{k-1} = x_j^{k-1}) , \quad j = 0, \ldots,
	N_{k-1}+1
\]
have already been computed.\\

We then could compute $\Prob(\widehat	 X^{k} = x_l^{k}), \, l = 0,\ldots,
N_k+1$ directly by means of Proposition \ref{prop:weights}. 
However, this approach requires $2(N_k+2)(N_{k-1}+2)$ evaluations of the
barycentric coordinate $\lambda^k_l$, 
which are not cheap operations, since each evaluation involves a nearest
neighbor search to find the matching line segment $\overline{x^k_{j^\ast}\,
x^k_{j^\ast+1}}$.

Therefore it is more efficient to first iterate through the state space
\[
	\overline \Gamma_{k-1} \cup \bigl(\overline \Gamma_{k-1} + \alpha_k\bigr)
\] 
of the r.v. $\widehat X^{k-1} + \alpha_k Z_k$.
While computing for each $\xi \in \overline \Gamma_{k-1} \cup \overline
\Gamma_{k-1} + \alpha_k$ its matching line segment $\overline{x^k_{j^\ast}\,
x^k_{j^\ast+1}}$ in the grid $\Gamma_k$, we directly update the weight
vector $(w_l^k)_{0 \leq l \leq N_k+1}$ at positions $l=j^\ast(\xi)$ and
$l=j^\ast(\xi) + 1$.

This approach is given by Algorithm \ref{alg} and needs only $2
(N_{k-1}+2)$ nearest neighbor searches per layer $k$.

\begin{figure}[htbp]
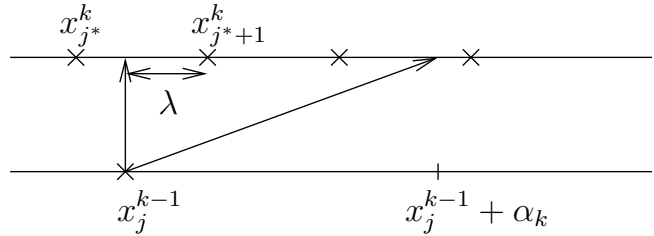

    \begin{center}
      \resizebox{8.5cm}{!}{\input weightsUpInt.pdf_t}
    \end{center}
\caption{Weight-updating}
\label{fig:weightUp}
  \end{figure}

\com{
We then rely on the
induction formula
\com{We then proceed by applying the step}
\[
	\widehat X^k = \Ipk\!(\widehat X^{k-1} + \alpha_k Z_k),
\]
where $\widehat X^{k-1} + \alpha_k Z_k$ is a r.v. with state space
\com{[
	\Gamma_{k-1} \cup \{x_0^{k-1},x_{N_{k-1}+1}^{k-1}\} 
		\cup (\Gamma_{k-1} + \alpha_k) 
			\cup \{x_0^{k-1}+ \alpha_k,x_{N_{k-1}+1}^{k-1}+ \alpha_k\},
\]}
$
	\overline \Gamma_{k-1} \cup (\overline \Gamma_{ method for
	k-1} + \alpha_k),
$ 
since $Z_k$ is $\{0,1\}$-valued.

If we assume that the weights
\[
	w_j^{k-1} = \Prob(\widehat	 X^{k-1} = x_j^{k-1}) , \quad j = 0, \ldots,
	N_{k-1}+1
\]
have already been computed,
we then apply the random operator $\Ipk$ on the points $x^{k-1}_j $ and $
x^{k-1}_j + \alpha_k$ with associated probabilities $(1-p_k) \cdot w_j^{k-1}$
and $p_k \cdot w_j^{k-1}$ for $1\leq j\leq N_{k-1}$.

\begin{figure}[htbp]
    \begin{center}
      \resizebox{8.5cm}{!}{\input weightsUpInt.pdf_t}
    \end{center}
\caption{Weight-Updating}
\label{fig:weightUp}
  \end{figure}

That means we perform for the case $Z_k = 0 $ the weight updating at $x:=
x_j^{k-1}$ as
\begin{equation*}
\begin{split}
	 w^k_{j^\ast}  & = \lambda \cdot (1 - p_k) \cdot w_j\\
	 w^k_{j^\ast+1}  & = (1 - \lambda) \cdot (1 - p_k) \cdot w_j,
\end{split}
\end{equation*} 
where 
\[
	\lambda = \frac{x^k_{j^\ast\!+1} - x} {x^k_{j^\ast\!+1} -
	x^k_{j^\ast}}
\]
and $\overline{x_{j^\ast}  x_{j^\ast\!+1}}$ denotes the line segment in which
$x$ falls (this can be determined by a divide and conquer nearest neighbor
search).

In the case $Z_k = 1$, i.e. $x = x_j^{k-1} + \alpha_k$ this procedure
consequently reads
\begin{equation*}
\begin{split}
	 w^k_{j^\ast}  & = \lambda \cdot p_k \cdot w_j\\
	 w^k_{j^\ast+1}  & = (1 - \lambda) \cdot p_k \cdot w_j.
\end{split}
\end{equation*} 

By accumulating these weights over all possible states $x^{k-1}_j,
x^{k-1}_j + \alpha_k, 1\leq j\leq N_{k-1}$ we arrive at the discrete
distribution $(w_j^k, x_j^k)_{1\leq j\leq N_k}$ of $\widehat X^k$.

Applying the above step successively for all layers $k = 1, \ldots, n$ we
finally establish the whole distribution of $\widehat X^n$ and may compute the
approximation of $\E F(X^n )$ as
\[
	\sum_{j=1}^{N_n} w_j^n \cdot F(x^n_j).
\]

The complete weight computation reads as follows in an algorithmic description:
}
\begin{algorithm}
\caption{Weight-Computation for the Dual Quantization scheme
(\ref{def:DQS})}\label{alg}
\begin{algorithmic}
\STATE {\# {\bf Initialization}}
\STATE $\Gamma_0  \leftarrow \{ 0 \}$
\STATE $w^0_1  \leftarrow 1$
\STATE
\FOR{$k=1, \ldots, n$}
	\STATE 
	\STATE $w_j^{k} \leftarrow 0, \quad j = 0, \ldots, N_k+1$
	\STATE
	\FOR{$j = 0, \ldots, N_k+1$}
		\STATE {\# {\bf Case: $Z_k = 0$}}
		\STATE Find line segment $\overline{x^k_{j^\ast}  x^k_{j^\ast\!+1}}$ in which 
		$x_j^{k-1}$ falls
		\STATE $\lambda \leftarrow \frac{x^k_{j^\ast\!+1} -
		x_j^{k-1}}{x^k_{j^\ast\!+1} - x^k_{j^\ast}}$ 
		\STATE Set		
		\STATE $\quad w^k_{j^\ast}  +\!\!= \lambda \cdot (1 - p_k) \cdot w_j^{k-1}$
	 	\STATE $\quad w^k_{j^\ast+1}  +\!\!= (1 - \lambda) \cdot (1 - p_k) \cdot	w_j^{k-1}$
		\STATE
		\STATE {\# {\bf Case: $Z_k = 1$}}
		\STATE Find line segment $\overline{x^k_{j^\ast}  x^k_{j^\ast\!+1}}$ in which 
		$x_j^{k-1}+\alpha_k$ falls
		\STATE $\lambda \leftarrow \frac{x^k_{j^\ast\!+1} -
		(x_j^{k-1}+\alpha_k)}{x^k_{j^\ast\!+1} - x^k_{j^\ast}}$
		\STATE Set		
		\STATE $\quad w^k_{j^\ast}  +\!\!= \lambda \cdot p_k \cdot w_j^{k-1}$
	 	\STATE $\quad w^k_{j^\ast+1}  +\!\!= (1 - \lambda) \cdot p_k \cdot w_j^{k-1}$		
	\ENDFOR
	\STATE
\ENDFOR
\end{algorithmic}
\end{algorithm}

\subsection{Speeding up the procedure}

\subsubsection{Aggregation of insertion steps}

In view of the global error bound
from Theorem \ref{thm:glbErr} it is useful to reduce the number of grid
insertion steps $\Ipk$. A natural way to do so, is to aggregate $n_0$ r.v. $Z_i$
into
      \[
      Z'_k = \sum_{i=(k-1)n_0+1}^{k\cdot n_0} \alpha_i Z_i, \qquad k =
      1, \ldots, n/n_0
      \]
      and then set
      \[
      \widehat X^k = \Ipk\!(\widehat X^{k-1} + Z'_k), \quad k = 1, \ldots,
      n/n_0.
      \]

      E.g. with a choice of $n_0 = 2$ we would insert a binomial r.v.
      with $4$ states at every grid-point, but performing only
      $1/2$ of the insertions.
 
 However, for this choice of $n_0 = 2$ the overall number of nearest neighbor
 searches for the matching line segment $\overline{x_{j^\ast}  x_{j^\ast\!+1}}$
 remains the same as for $n_0 = 1$.
 
\subsubsection{Romberg extrapolation}

An additional improvement of this method is based on the heuristic guess that
the approximation of $\E F(X)$ by $\E F(\widehat X^n)$ (see Proposition
\ref{prop:secOrder}) admits a higher order expansion
\begin{equation}\label{eq:assumpRom}
      \E F(X) = \E F(\widehat X^n) + \kappa N^{-2} + o(N^{-2}).
\end{equation}      
      
We then may use quantization grids of two different sizes $N_1 \ll N_2$ to
cancel the second order term $\kappa N^{-2}$ in the above representation.

This leads to the {\it Romberg extrapolation} formula
\begin{equation}\label{eq:romberg}
 \E F(X) = \frac{N_1^2\, \E F(\widehat X^n_{N_1}) - N_2^2\, \E F(\widehat
 X^n_{N_2}) }{N_1^2 - N_2^2} + o(N_2^{-2}).
\end{equation}

Although assumption (\ref{eq:assumpRom}) is only of a heuristic nature,
numerical results seem to confirm this conjecture (like for ``regular''optimal
quantization).
 
 \subsection{Numerical experiments}
 
 For the numerical results we implemented the above dual quantization scheme for
 grids of constant size $500$ and $1000$ in all layers $k = 1, \ldots, n$.
  Regarding the Romberg extrapolation approach we applied the extrapolation
  formula (\ref{eq:romberg}) for sizes $100$ and $500$. 

As concerns methods to compare our approach to, we implemented a
saddlepoint-point method (cf. \cite{spThompson}) and the Stein approach for a Poisson and Normal
approximation developed in \cite{steinKaroui}. 
 
 We tested two typical situations: homogeneous and truly inhomogeneous
 Bernoulli random walks.
 
 \subparagraph{Homogeneous random walk.}
 
 Let us start with a homogeneous Test-Scenario, i.e. all $\alpha_i$ are chosen
 equal to $1$. Moreover we assume the $p_i$ to be a $n$ sample of a log-normal
 distribution, which corresponds to the case of a Gaussian copula.
 Hence the parameters read as follows:
 
  \begin{itemize}
  \item $n= 100$,
  \item $\alpha_i = 1$,
  \item $p_i = p_0 \, \exp(\sigma \xi_i - \sigma^2/2), \quad \xi_i \sim
  \Normal(0,1)$ i.i.d.,

    with $p_0 \in \{0.05, 0.1, 0.2 \}, \; \sigma = 0.5$,
  \end{itemize}

 Since this setting yields a recombining binomial tree, we can compute the
 exact reference values of
 \[
\E \Bigl( \sum_{i=1}^n Z_i - K\Bigr)^+
\]
for $K\in[0,50]$ and plot the absolute errors as a function of the strike $K$
to illustrate the numerical performances of the methods. This has been reported
in Figures \ref{fig:homo1} to \ref{fig:homo3}.

\begin{figure}
    \centering
    \includegraphics[width=0.62\textwidth,angle=270]{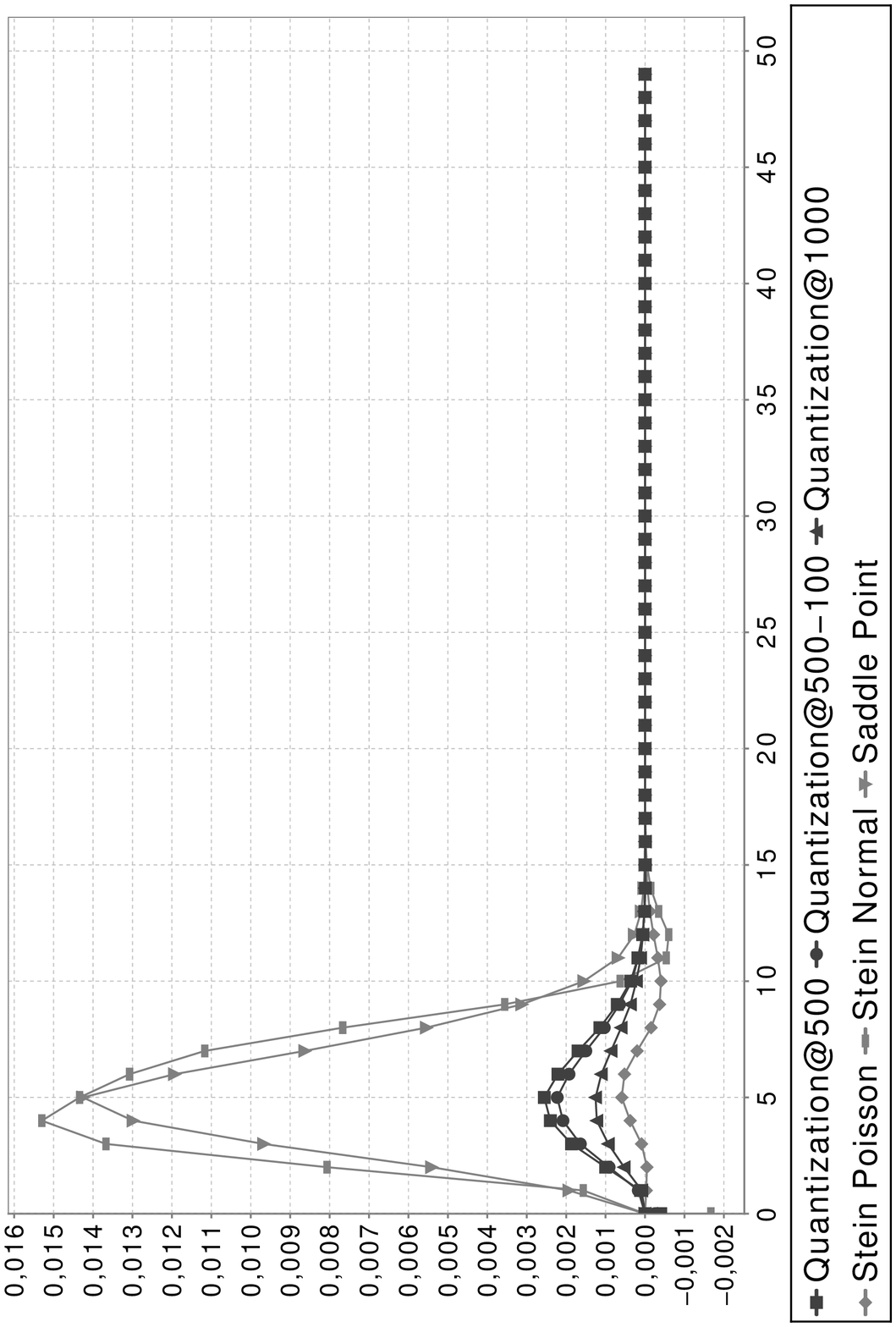}
    \caption{Absolute Errors for the call of various strikes ($p_0 = 0.05,
    \sigma = 0.5$).}\label{fig:homo1}
  \end{figure}

  \begin{figure}
    \centering
    \includegraphics[width=0.62\textwidth,angle=270]{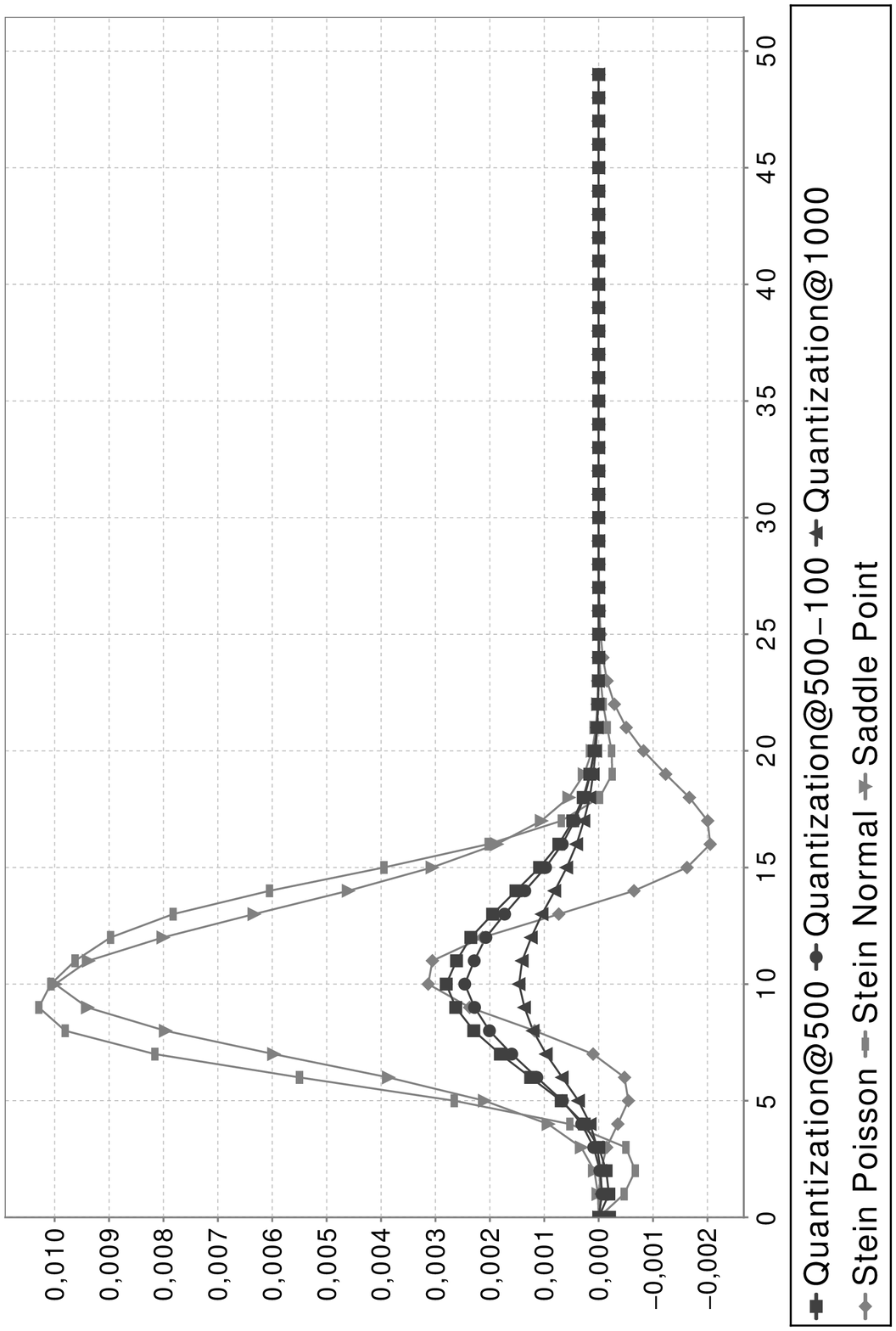}
    \caption{Absolute Errors for the call of various strikes ($p_0 = 0.10,
    \sigma = 0.5$).}\label{fig:homo2}
  \end{figure}

  \begin{figure}
    \centering
    \includegraphics[width=0.62\textwidth,angle=270]{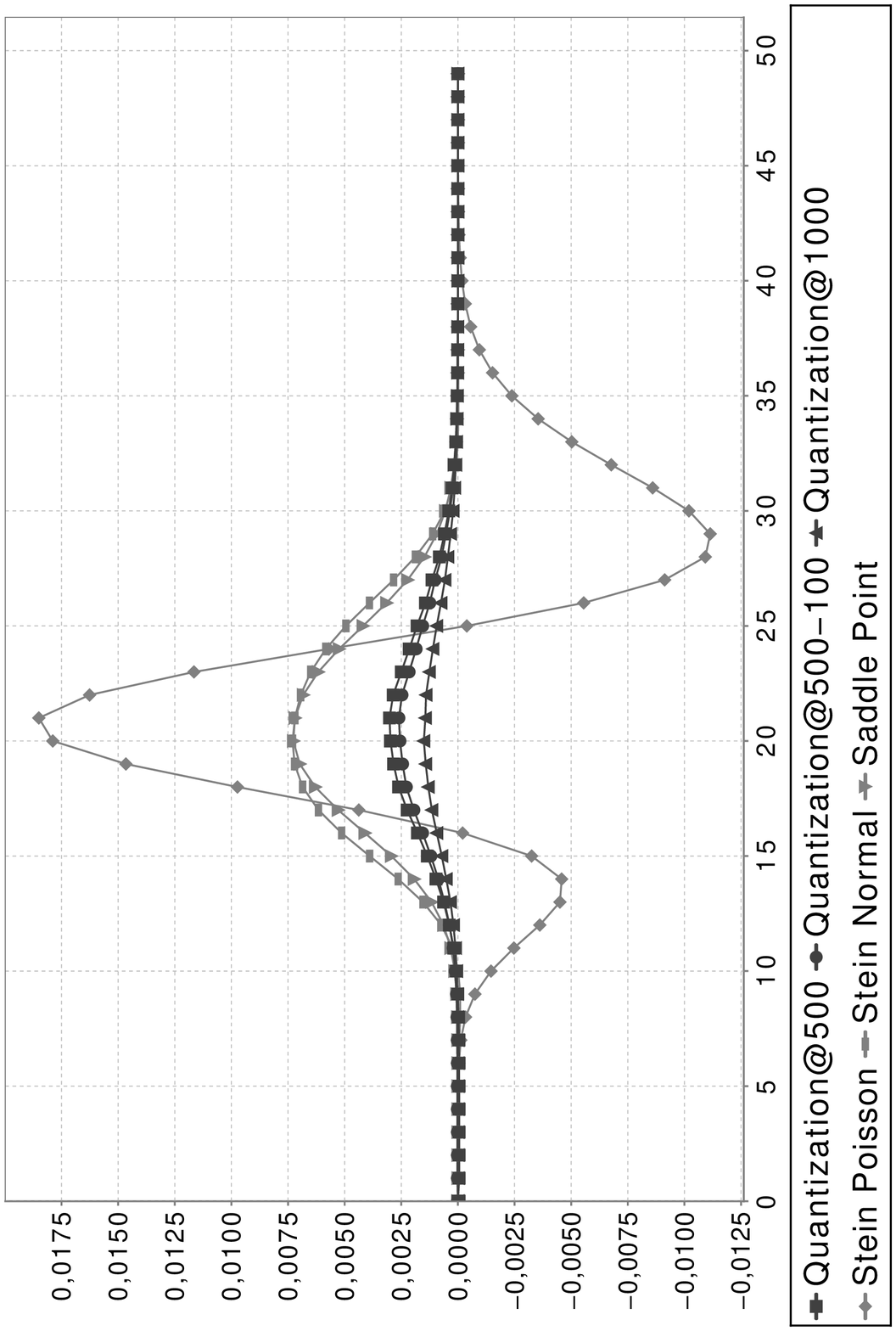}
    \caption{Absolute Errors for the call of various strikes ($p_0 = 0.20,
    \sigma = 0.5$).}\label{fig:homo3}
  \end{figure}

\subparagraph{Inhomogeneous random walk I.}
To discuss a more realistic scenario, we present an inhomogeneous setting with
$\alpha_i$ uniform distributed on the integers $\{1,2,\cdots, 10\}$, so that it
is still possible to compute some reference values by means of a recombining
binomial tree. The parameters read as follows

  \begin{itemize}
  \item $n= 100$,
  \item $\alpha_i \sim \mathcal{U}\{1,2,\cdots, 10\}$,
  \item $p_i = p_0 \, \exp(\sigma \xi_i - \sigma^2/2), \quad \xi_i \sim \Normal(0,1)$ i.i.d.,

    $p_0 \in \{0.05, 0.2 \}, \; \sigma = 0.5$,
  \end{itemize}

  The numerical results are depicted in Figures \ref{fig:inhomo1} and
  \ref{fig:inhomo2}.
   Note that we have excluded the Stein-Poisson
approach since this setting is already out of the
Poisson-limit domain for $p_0 = 0.05$ and consequently yield bad results.

 \begin{figure}
    \centering
    \includegraphics[width=0.62\textwidth,angle=270]{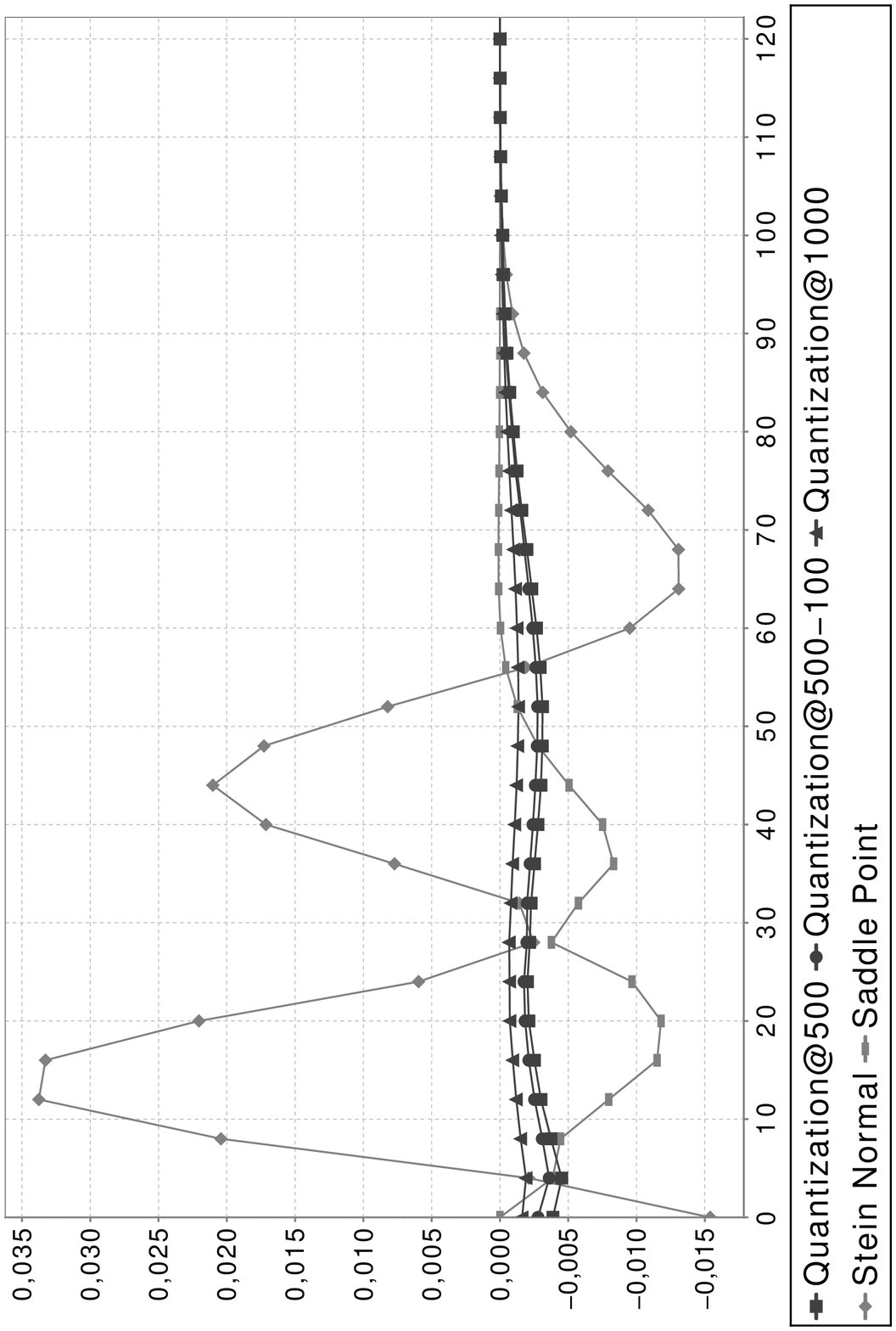}
    \caption{Absolute Errors for the call of various strikes ($p_0 = 0.05,
    \sigma = 0.5$).}\label{fig:inhomo1}
  \end{figure}

  \begin{figure}
    \centering
    \includegraphics[width=0.62\textwidth,angle=270]{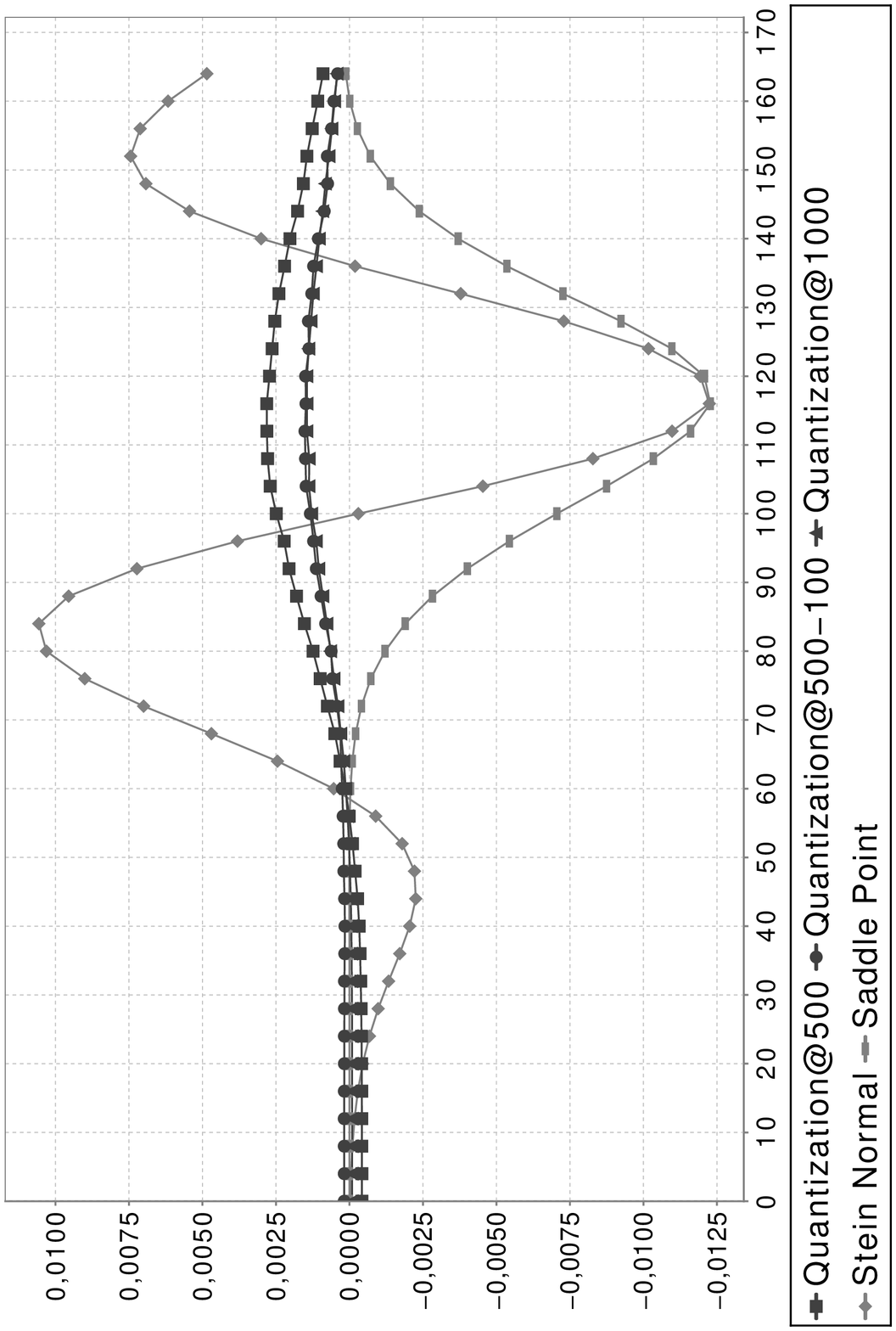}
    \caption{Absolute Errors for the call of various strikes ($p_0 = 0.20,
    \sigma = 0.5$).}\label{fig:inhomo2}
  \end{figure}

\subparagraph{Inhomogeneous random walk II.}
Finally, we present a non-trivial case, where the $\alpha_i$ are non-integer
valued any more, i.e. we have chosen them to be $\mathcal{U}([0,1])$
distributed. 
Since in this setting the recombining property of a binomial tree is destroyed,
we cannot compute the exact reference value any more.
Therefore, we have chosen a grid of size $N = 10000$ to compute
a reference value, since such a large grid size yields in all former
examples an absolute error less than $10^{-8}$.
To be more precise, the parameters has been chosen as follows:

 \begin{itemize}
  \item $n= 100$,
  \item $\alpha_i \sim \mathcal{U}([0,1])$,
  \item $p_i = p_0 \, \exp(\sigma \xi_i - \sigma^2/2), \quad \xi_i \sim \Normal(0,1)$ i.i.d.,

    $p_0 \in \{0.05, 0.2 \}, \; \sigma = 0.5$.
 \end{itemize}
 
 Since the Figures \ref{fig:qref1} and \ref{fig:qref2} are quite similar to
 those obtained in the first inhomogeneous setting (except a lower
 resolution), it seems very likely, that the former inhomogeneous setting
 is a very generic case to illustrate the general performance of the three
 tested methods.

 \begin{figure}
    \centering
    \includegraphics[width=0.62\textwidth,angle=270]{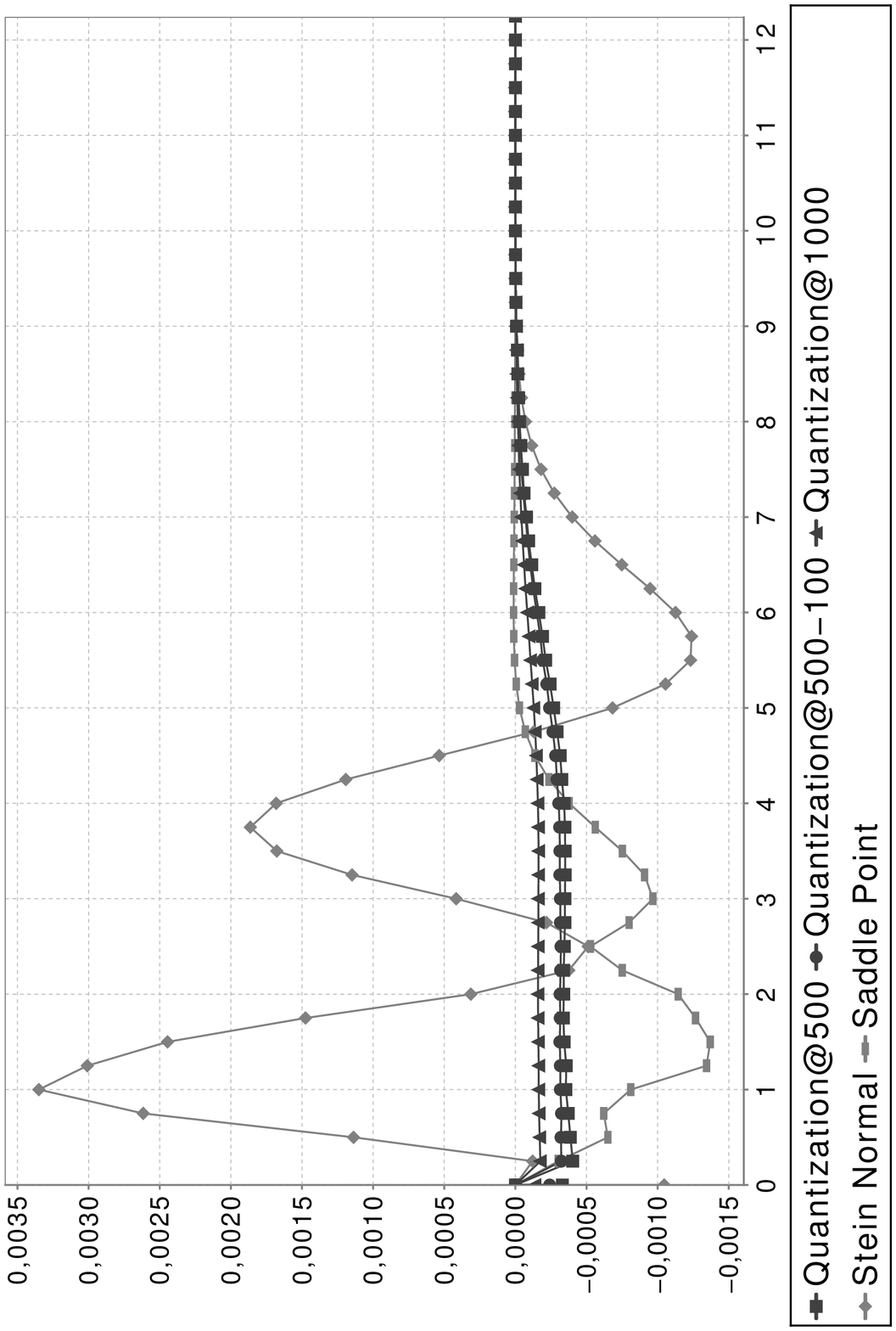}
    \caption{Absolute Errors for the call of various strikes ($p_0 = 0.05,
    \sigma = 0.5$).}\label{fig:qref1}
  \end{figure}

  \begin{figure}
    \centering
    \includegraphics[width=0.62\textwidth,angle=270]{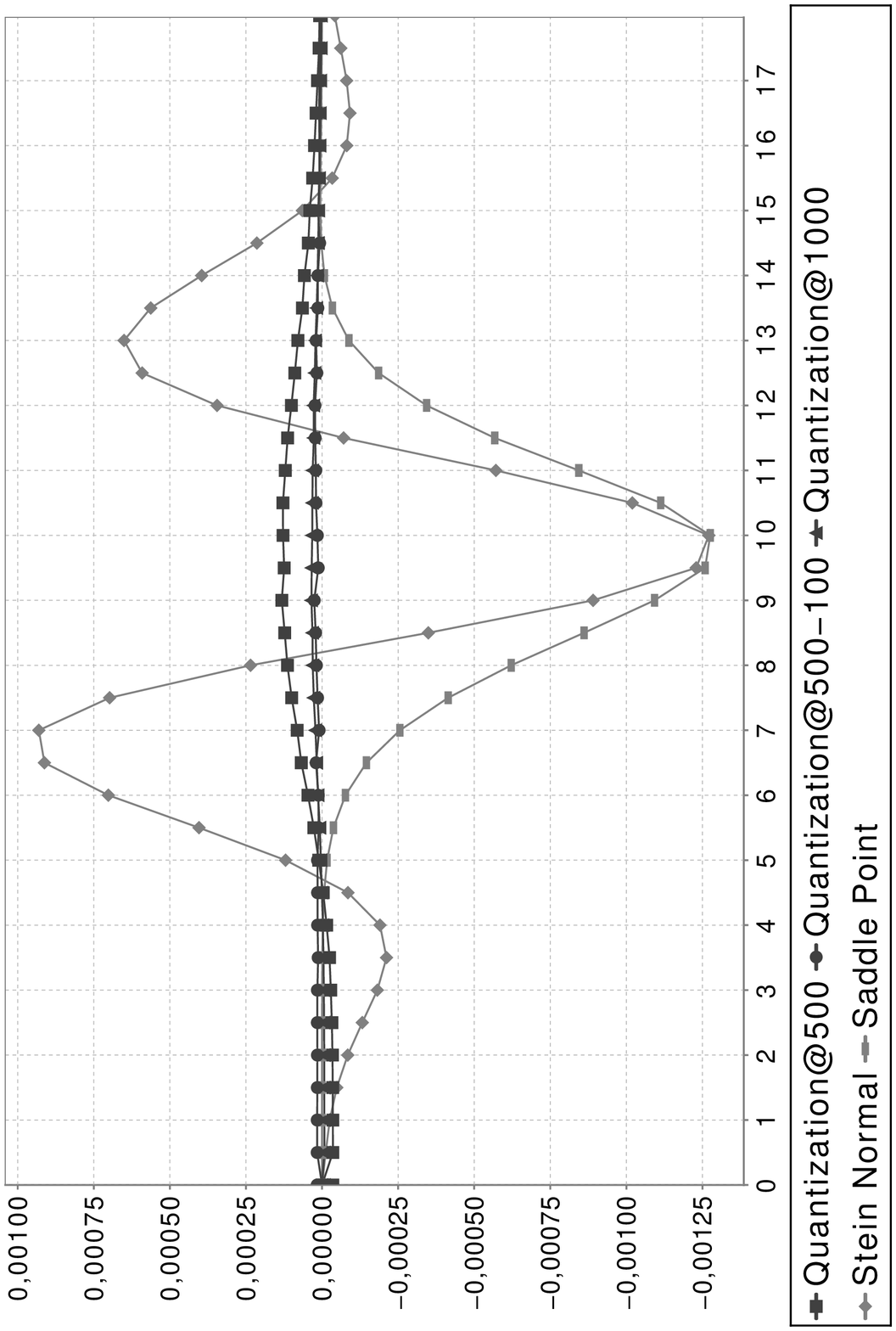}
    \caption{Absolute Errors for the call of various strikes ($p_0 = 0.20,
    \sigma = 0.5$).}\label{fig:qref2}
  \end{figure} 
 
 In all the above cases the quantization method remains very stable and
 outperforms even for a grid size of $N = 500$ in nearly all cases the
 other tested methods.
 Only in the homogeneous setting and for very small probabilities $p_i$, it
 cannot achieve the performance of the Stein-Poisson approximation.
 However, this excellence of the Stein-Poisson method in that particular
 setting is mainly caused by the fact, that the target distribution is an
 integer-valued one, as the Possion approximation is. Hence, these result are 
nontransferable to the inhomogeneous case.

In the more complex inhomogeneous setting (Figures \ref{fig:inhomo1} and
\ref{fig:inhomo2}), we still observe a strong domination of the quantization
methods for small and moderate probabilities. Furthermore, in the case $p_0 =
0.2$, we even get an error for the Romberg extrapolation with grid sizes $500$
and $100$, which is close to that of a $1000$-point quantization.

Concerning the computational time for the processing of our Dual Quantization
algorithm, this approach is of course not as fast as the Stein's method, where
one only needs to compute the two first moments of $X$ and then evaluates the
CDF-function of the standard normal distribution.
To apply our scheme, we have to process at each layer $k, 0\leq k \leq n$ at
full grid $\Gamma_k$ similar to recombining tree methods.
Nevertheless, the execution of Algorithm 1 implemented in {\tt C\#} on a {\tt
Intel Xeon} CPU@3GHz took for a grid size of $N=500$ only a few milliseconds.
Moreover, once the distribution of $\widehat X^n$ is established, we compute
$\E (\widehat X^n - K)_+$ for several strikes $K$ (as needed in practical
applications) in nearly no time.

Finally, we want to emphasize, that this approach gives, through the freedom to
choose a larger grid size, a control on the acceptable error for the
approximation.

\section{Approximation of the Greeks}

	Concerning the computation of sensitivities with respect to the parameters
	$\alpha_l$ and $p_l, 1\leq l\leq n$, we consider $f: \R^n_+ \times (0,1)^n \to
	\R,$ defined by
  \[
  (\alpha, p) \mapsto f(\alpha, p) := \E \Bigl( \sum_{i=1}^n \alpha_i Z_i -
  K\Bigr)^+.
  \]

  We are now interested in the computation of $\frac{\partial f}{\partial
  p_l}$ and $\frac{\partial f}{\partial \alpha_l} $.

Some elementary calculations reveal that for every $l\in \{0, \ldots, n\}$
  \[
  \frac{\partial f}{\partial p_l} = \E \Bigl( \sum_{i\neq l} \alpha_i Z_i -
  (K-\alpha_l)\Bigr)^+ -\E \Bigl( \sum_{i\neq l} \alpha_i Z_i - K\Bigr)^+
  \]
  and
  \[
  \frac{\partial f}{\partial \alpha_l} = p_l\cdot \Prob \Bigl( \sum_{i\neq l}
  \alpha_i Z_i \geq K-\alpha_l\Bigr),
  \]
so that our task consists of approximating the distribution of
\[
\partial_l X := \sum_{i\neq l} \alpha_i Z_i.
\]
This can be achieved
using a straightfoward adaption of the previous dual quantization tree, 
where we simply skip the $l$-th layer.

\com{{Quantization Scheme}
    Previous approach, where we skip the $k$-th layer}
   To be more precise, we set
\begin{equation*}
   \begin{split}
     \widehat{\partial_l X^0} & := 0\\
     \widehat{\partial_l X^k} & :=
     \begin{cases}
       \Ipk\!(\widehat X^{k-1} + \alpha_k Z_k)& k\neq l,\\
       \widehat{\partial_l X^{k-1}} & k = l.\\
     \end{cases}
  \end{split}
\end{equation*}

\begin{remark} 
This scheme can be processed simultaneously for all $l,\; 1\leq l \leq n$
without increasing the number of nearest neighbor searches.
\end{remark}

Numerical experiments, which are not reproduced here, also confirm the
good numerical performance of the dual quantization in this specific setting.

\subparagraph{Acknowledgements} We are very thankful to F.X. Vialard from
Zeliade Systems for helpful discussions and comments during our work on this
topic.

\bibliography{literatur}

\end{document}